\newcommand{\Wmin}{W^\mathrm{min}}
\newcommand{\BEAS}{\begin{eqnarray*}}
\newcommand{\EEAS}{\end{eqnarray*}}
\newcommand{\BEQ}{\begin{equation}}
\newcommand{\EEQ}{\end{equation}}
\newcommand{\BIT}{\begin{itemize}}
\newcommand{\EIT}{\end{itemize}}
\newcommand{\eg}{{\it e.g.}}
\newcommand{\ie}{{\it i.e.}}
\newcommand{\ones}{\mathbf 1}
\newcommand{\reals}{{\mbox{\bf R}}}
\newcommand{\symm}{{\mbox{\bf S}}}
\newcommand{\Expect}{\mathbf{E}}
\newcommand{\prob}{\mathbf{Prob}}
\newcommand{\var}{\mathop{\bf var}}
\long\def\@makecaption#1#2{
   \vskip 9pt 
   \begin{small}
   \setbox\@tempboxa\hbox{{\bf #1:} #2}
   \ifdim \wd\@tempboxa > 5.5in
        \begin{center}
        \begin{minipage}[t]{5.5in}
        \addtolength{\baselineskip}{-0.95pt}
        {\bf #1:} #2 \par
        \addtolength{\baselineskip}{0.95pt}
        \end{minipage}
        \end{center}
   \else 
	\hbox to\hsize{\hfil\box\@tempboxa\hfil}  
   \fi
   \end{small}\par
}
\newcounter{oursection}
\newcounter{lecture}
\newtheorem{lem}{Lemma}
\title{Risk-Constrained Kelly Gambling}
\author{Enzo Busseti \and Ernest K. Ryu \and Stephen Boyd}
\date{\today}
\begin{document}
\maketitle
\begin{abstract}
We consider the classic Kelly gambling problem
with general distribution of outcomes, and an
additional risk constraint that limits the probability
of a drawdown of wealth to a given undesirable level.
We develop a bound on the drawdown probability;
using this bound instead of the original risk constraint
yields a convex optimization problem that guarantees 
the drawdown risk constraint holds.  
Numerical experiments show that
our bound on drawdown probability is reasonably close
to the actual drawdown risk, as computed by Monte Carlo
simulation.
Our method is parametrized by a single parameter that 
has a natural interpretation as a risk-aversion
parameter, allowing us to systematically trade off asymptotic 
growth rate and drawdown risk.
Simulations show that this method yields bets that 
out perform fractional-Kelly
bets for the same drawdown risk level or growth rate.
Finally, we show that a natural quadratic approximation of our
convex problem is closely connected to the classical
mean-variance Markowitz portfolio selection problem.
\end{abstract}

\section{Introduction}

In 1956 John Kelly proposed a systematic way to allocate a total wealth
across a number of bets so as to maximize the long term growth rate
when the gamble is repeated \cite{kelly1956new,maclean2011kelly}. 
Similar results were later derived in the finance literature,
under the name of \emph{growth-optimum portfolio};
see, \eg, \cite[Ch.\ 6]{merton1990continuous}).
It is well known that with Kelly optimal bets
there is a risk of the wealth dropping substantially 
from its original value before increasing, \ie, a \emph{drawdown}.
Several ad hoc methods can be used to limit this drawdown risk,
at the cost of decreased growth rate.
The best known method is fractional Kelly betting,
in which only a fraction of the 
Kelly optimal bets are made \cite{davis2012fractional}.
The same method has been proposed in the finance 
literature \cite{browne2000risk}.
Another ad hoc method is
Markowitz's mean-variance portfolio optimization
\cite{markowitz1952portfolio}, which trades off two objectives
that are related to, but not the same as, long term growth rate
and drawdown risk.

In this paper we directly address drawdown risk
and show how to find bets that trade off drawdown risk and 
growth rates.
We introduce the risk-constrained Kelly gambling problem,
in which the long term wealth growth rate is maximized
with an additional constraint that limits the probability
of a drawdown to a specified level.
This idealized problem captures 
what we want, but seems very difficult to solve.
We then introduce a convex optimization problem that is 
a \emph{restriction} of this problem; that is, its feasible
set is smaller than that of the risk-constrained problem.
This problem is tractable, using modern convex optimization
methods.  

Our method can be used in two ways.  First, it can be used
to find a conservative set of bets that are guaranteed to
satisfy a given drawdown risk constraint.  Alternatively, its
single parameter can be interpreted as a risk-aversion parameter
that controls the trade-off between growth rate and drawdown risk,
analogous to Markowitz mean-variance portfolio optimization 
\cite{markowitz1952portfolio},
which trades off mean return and (variance) risk.
Indeed, we show that a natural quadratic approximation
of our convex problem can be closely connected
to Markowitz mean-variance portfolio optimization.

In \S\ref{s-kelly} we review the Kelly gambling problem, and describe methods
for computing the optimal bets using convex optimization.
In simple cases, such as when there are two possible 
outcomes, the Kelly optimal bets are well known.  In others, for example when
the returns come from infinite distributions, the methods do not seem to be well known.
In \S\ref{s-drawdown}, we define the drawdown risk, and in \S\ref{s-risk-bound}, we derive 
a bound on the drawdown risk.  In \S\ref{s-rck}, we use this bound to form 
the risk-constrained Kelly gambling problem, which is a tractable convex optimization
problem.
In \S\ref{s-quadratic}, we derive a quadratic approximation of the risk-constrained
Kelly gambling problem, and relate it to classical Markowitz portfolio optimization.
Finally, in \S\ref{s-examples} we give some numerical examples to illustrate the 
methods.

\section{Kelly gambling}\label{s-kelly}
In Kelly gambling, we place a fixed fraction of our total 
wealth (assumed positive) on $n$ bets.  
We denote the fractions 
as $b \in \reals^n$, so $b \geq 0$ and $\ones^Tb=1$, where
$\ones$ is the vector with all components $1$.
The $n$ bets have a random nonnegative payoff or return, 
denoted $r\in \reals^n_+$,
so the wealth after the bet changes by the (random) factor $r^Tb$.
We will assume that all bets do not have infinite return in expectation, \ie, that
$\Expect r_i<\infty$ for $i=1,\dots,n$.
We will also assume that the bet $n$ has a certain return of one,
\ie, $r_n =1$ almost surely.  This means that $b_n$ represents the
fraction of our wealth that we do not wager, or hold 
as cash.  The bet vector $b=e_n$ corresponds to not betting at all.
We refer to the bets $1,\ldots, n-1$ as the risky bets.

We mention some special cases of this general Kelly gambling setup.
\BIT
\item \emph{Two outcomes.} We have $n=2$, and $r$ takes on only
two values: $(P,1)$, with probability $\pi$,
and $(0,1)$, with probability $1-\pi$.
The first outcome corresponds to winning the bet
($\pi$ is the probability of winning) and $P>1$ is the payoff.
\item \emph{Mutually exclusive outcomes.} 
There are $n-1$ mutually exclusive outcomes, with return vectors:
$r = (P_ke_k,1)$ with probability $\pi_k > 0$, for $k=1, \ldots, n-1$,
where $P_k>1$ is the payoff for outcome $k$, and $e_k$ is the 
unit vector with $k$th entry one and all others $0$.
Here we bet on which of outcomes $1,\ldots, n-1$ will be the 
winner (\eg, the winner of a horse race).
\item \emph{General finite outcomes.}  The return vector takes
on $K$ values
$r_1, \ldots, r_K$, with probabilities $\pi_1, \ldots, \pi_K$.
This case allows for more complex bets, for example in horse racing
show, place, exacta, perfecta, and so on.
\item \emph{General returns.} 
The return $r$ comes from an arbitrary infinite distribution
(with $r_n=1$ almost surely).
If the returns are log-normal, the gamble is
a simple model of investing 
(long only) in $n-1$ assets with log-normal returns;
the $n$th asset is risk free (cash).
More generally, we can have $n-1$ arbitrary derivatives (\eg, options)
with payoffs that depend on an underlying random variable.
\EIT

\subsection{Wealth growth}

The gamble is repeated at times (epochs) $t=1,2,\ldots$,
with IID (independent and identically distributed) returns.
Starting with initial wealth $w_1=1$, 
the wealth at time $t$ is given by
\[
w_t = (r_1^Tb) \cdots (r_{t-1}^T b),
\]
where $r_t$ denotes the realized return at time $t$
(and not the $t$th entry of the vector).
The wealth sequence $\{w_t\}$ is a stochastic process
that depends on the choice of the bet vector $b$, as well as
the distribution of return vector $r$.
Our goal is to choose $b$ so that, roughly speaking,
the wealth becomes large.

Note that $w_t \geq 0$, since $r \geq 0$ and $b \geq 0$.
The event $w_t=0$ is called \emph{ruin}, and can only happen
if $r^Tb=0$ has positive probability.
The methods for choosing $b$ that we discuss below
all preclude ruin, so we will assume it does not occur,
\ie, $w_t>0$ for all $t$, almost surely.
Note that if $b_n>0$, ruin cannot occur since
$r^T b \geq b_n$ almost surely.

With $v_t = \log w_t$ denoting the logarithm of the wealth, we have
\[
v_t = \log (r_1^Tb) + \cdots + \log (r_{t-1}^Tb).
\]
Thus $\{v_t\}$ is a random walk, with increment 
distribution given by the distribution of $\log (r^T b)$.
The drift of the random walk is $\Expect \log (r^Tb)$;
we have $\Expect v_t = (t-1) \Expect \log (r^Tb)$, and
$\var v_t  = (t-1)\var \log(r^Tb)$.
The quantity $\Expect \log(r^Tb)$ can be interpreted as the 
average growth rate of the wealth; it is the drift in the 
random walk $\{v_t\}$.
The (expected) growth rate $\Expect \log(r^Tb)$ is a 
function of the bet vector $b$.

\subsection{Kelly gambling}
In Kelly gambling, we choose $b$ to maximize $\Expect \log (r^Tb)$,
the growth rate of wealth. 
This leads to the optimization problem
\BEQ\label{e-kelly}
\begin{array}{ll}
\text{maximize} & \Expect \log (r^Tb) \\
\text{subject to} & \ones^T b = 1, \quad b \geq 0,
\end{array}
\EEQ
with variable $b$.
We call a solution $b^\star$ of this problem a set of \emph{Kelly optimal} bets.
The Kelly gambling problem 
is always feasible, since $b=e_n$ (which corresponds
to not placing any bets) is feasible.
This choice achieves objective value zero, so the optimal value 
of the Kely gambling problem is always nonnegative.
The Kelly gambling problem \eqref{e-kelly} is a convex 
optimization problem,
since the objective is concave and the constraints are convex.

Kelly optimal bets maximize growth rate of the wealth.
If $\tilde b$ is a bet vector that is not Kelly optimal, with
associated wealth sequence $\tilde w_t$,
and $b^\star$ is Kelly optimal, with associated wealth sequence
$w^\star_t$,
then  $w^\star_t/\tilde w_t \to \infty$ with probability one as $t \to \infty$.
(This follows immediately since the random walk $\log w^\star_t - 
\log \tilde w_t$ has positive drift \cite[\S XII.2]{feller1971}.
Also see \cite[\S16]{cover2012elements} for a general discussion of
Kelly gambling.)

We note that
the bet vector $b=e_n$ is Kelly optimal
if and only if
$\Expect r_i \leq 1$ for $i=1, \ldots, n-1$.
Thus we should not bet at all
if all the bets are losers in expectation; 
conversely, if just one bet is a winner in expectation,
the optimal bet is not the trivial one $e_n$, and
the optimal growth rate is positive.
We show this in the appendix.

\subsection{Computing Kelly optimal bets}
Here we describe methods to compute Kelly optimal bets, 
\ie, to solve the Kelly 
optimization problem~(\ref{e-kelly}). 
It can be solved analytically or semi-analytically for simple cases;
the general finite outcomes case can be handled by standard 
convex optimization tools, and the general case can be handled
via stochastic optimization.

\paragraph{Two outcomes.}
For a simple bet with two outcomes, with win probability $\pi$ and payoff $P$,
we obtain the optimal bet with simple minimization of a univariate function.
We have 
\[
b^\star = \left(\frac{\pi P - 1}{P-1},\frac{P-\pi P}{P-1}\right),
\]
provided $\pi P>1$; if $\pi P\leq 1$, the optimal bet is $b^\star=(0,1)$.
Thus we should bet 
a fraction $(\pi P - 1)/(P-1)$ of our wealth each time,
if this quantity is positive.

\paragraph{General finite outcomes.}
When the return distribution is finite the Kelly gambling problem
reduces to 
\BEQ\label{e-kelly-finite}
\begin{array}{ll}
\text{maximize} & \sum_{i=1}^K \pi_i \log (r_i^Tb) \\
\text{subject to} & \ones^T b = 1, \quad b \geq 0,
\end{array}
\EEQ
which is readily solved using convex optimization methods
\cite{boyd2004convex}.
Convex optimization software systems like CVX \cite{cvx}, 
CVXPY \cite{cvxpy} and Convex.jl \cite{convexjl},
based on DCP (Disciplined Convex Programming) \cite{grant2006dcp}, or
others like YALMIP \cite{lofberg2004yalmip},
can handle such problems directly. In our numerical simulation we
use CVXPY with the open source solver ECOS \cite{domahidi2013ecos}, 
recently extended to handle exponential
cone constraints \cite{akle2015algorithms}.

\paragraph{General returns.}
We can solve the Kelly gambling problem~(\ref{e-kelly}) even in
the most general case, when $r$ takes on an infinite number of values,
provided we can generate samples from the distribution of $r$.
In this case we can use a projected stochastic gradient method with averaging
\cite{robbins1951, nemirovski1983, polyak1987introduction,kushner2003,bubeck2015}.

As a technical assumption we assume here that the 
Kelly optimal bet $b^\star$ satisfies $(b^\star)_n>0$,
\ie, the optimal bet involves holding some cash.
We assume we know $\varepsilon>0$ that satisfies $\varepsilon< (b^\star)_n$
(which implies that $r^Tb > \varepsilon$ a.s.)
and define $\Delta_\varepsilon=\{b \mid \ones^Tb=1,~b \geq 0,~b_n\ge \varepsilon\}$.
Then the gradient of the objective is given by
\[
\nabla_b \Expect \log(r^Tb) =
\Expect\nabla_b  \log(r^Tb) = \Expect \frac{1}{r^Tb} r
\]
for any $b\in \Delta_\varepsilon$.
So if $r^{(k)}$ is an IID sample from the distribution,
\BEQ\label{e-sg}
\frac{1}{{r^{(k)}}^Tb} r^{(k)}
\EEQ
is an unbiased estimate of the gradient,
\ie, a stochastic gradient, of the objective  at $b$.
(Another unbiased estimate of the gradient can be obtained by averaging 
the expression \eqref{e-sg} over multiple return samples.)

The (projected) stochastic gradient method with averaging
computes the iterates 
\[
\bar b^{(k+1)} = \Pi \left( \bar b^{(k)} + 
\frac{t_k}{({r^{(k)}}^T \bar b^{(k)})}r^{(k)} \right), \quad k=1,2, \ldots,
\]
where the starting point $\bar b^{(1)}$ is any vector in $\Delta_\varepsilon$,
$r^{(k)}$ are IID samples from the distribution of $r$,
and $\Pi$ is (Euclidean) projection onto $\Delta_\varepsilon$
(which is readily computed; see
Lemma~\ref{lem-proj}).
The step sizes $t_k>0$ must satisfy
\[
\qquad t_k\rightarrow 0,
\qquad
\sum^\infty_{k=1}t_k=\infty.
\]
(For example, $t_k=C/\sqrt{k}$ with any $C>0$ satisfies this condition.)
Then the (weighted) running average
\[
{b}^{(k)} = \frac{\sum_{i=1}^k t_i \bar b^{(i)}}{\sum_{i=1}^k t_i}
\]
converges to Kelly optimal.
The stochastic gradient method can be slow to converge,
but it always works;
that is $\Expect\log (r^T {b}^{(k)})$ converges to the optimal growth rate.

In practice, one does not know a priori how small $\varepsilon$ should be.
One way around this is to choose a small $\varepsilon$,
and then check that $({b}^{(k)})_n> \varepsilon$ holds for large $k$,
in which case we know our guess of $\varepsilon$ was valid.
A more important practical variation on the algorithm is \emph{batching},
where we replace the unbiased estimate of the gradient with the average 
over some number of samples.  This does not affect the 
theoretical convergence of the algorithm, but can improve 
convergence in practice.

\section{Drawdown}\label{s-drawdown}
We define the \emph{minimum wealth} as
the infimum of the wealth trajectory over time,
\[
\Wmin = \inf_{t=1,2, \ldots } w_t.
\] 
This is a random variable, with distribution that depends on $b$.
With $b=e_n$, we have $w_t=1$ for all $t$, so $\Wmin=1$.
With $b$ for which $\Expect \log (r^Tb)>0$ (which we assume),
$\Wmin$ takes values in $(0,1]$. 
Small $\Wmin$ corresponds to a case where the initial wealth drops
to a small value before eventually increasing.

The \emph{drawdown} is defined as $1-\Wmin$.
A drawdown of $0.3$ means that
our wealth dropped $30\%$ from its initial value (one), before
increasing (which it eventually must do, 
since $v_t \to \infty$ with probability one).
Several other definitions of drawdown are used in the literature.
A large drawdown means that $\Wmin$ is small, \ie, our wealth drops to
a small value before growing.

The \emph{drawdown risk} is defined as
$\prob (\Wmin < \alpha)$,
where $\alpha \in (0,1)$ is a given target (undesired) minimum wealth.
This risk depends on the bet vector $b$ in a very complicated way.
There is in general no formula for the risk in terms of 
$b$, but we can always (approximately) compute the drawdown risk for 
a given $b$ using Monte Carlo simulation.
As an example, a drawdown risk of $0.1$ for $\alpha = 0.7$ means 
the probability of experiencing more than $30\%$ drawdown is only $10\%$.
The smaller the drawdown risk (with any target), the better.

\subsection{Fractional Kelly gambling}
It is well known that Kelly optimal bets can lead to 
substantial drawdown risk.
One ad hoc method for handling this is to
compute a Kelly optimal bet $b^\star$, and then use the so-called
fractional Kelly \cite{davis2012fractional} bet given by
\BEQ\label{e-fractional-kelly}
b = fb^\star + (1-f)e_n,
\EEQ
where $f\in (0,1)$ is the fraction. The fractional
Kelly bet scales down the (risky) bets by $f$.
Fractional Kelly bets have smaller drawdowns than Kelly bets, at 
the cost of reduced growth rate.
We will see that trading off growth rate and drawdown risk
can be more directly (and better) handled.

\subsection{Kelly gambling with drawdown risk}
We can add a drawdown risk constraint to the Kelly gambling 
problem~(\ref{e-kelly}), to obtain the problem
\BEQ\label{e-kelly-risk}
\begin{array}{ll}
\text{maximize} & \Expect \log (r^Tb) \\
\text{subject to} & \ones^T b = 1, \quad b \geq 0,\\
& \prob (\Wmin < \alpha) <\beta,
\end{array}
\EEQ
with variable $b$, where $\alpha,\beta \in (0,1)$ are
given parameters.
The last constraint limits the probability of a drop in wealth to
value $\alpha$ to be no more than $\beta$.
For example, we might take 
$\alpha = 0.7$ and $\beta = 0.1$, meaning that we require the 
probability of a drawdown of more than $30\%$ to be less than 10\%.
(This imposes a constraint on the bet vector $b$.)

Unfortunately the problem~(\ref{e-kelly-risk}) is,
as far as we know, a difficult optimization problem in general.
In the next section we will develop a bound on the drawdown risk
that results in a tractable convex constraint on $b$.
We will see in numerical simulations that the bound is 
generally quite good.

\section{Drawdown risk bound}\label{s-risk-bound}
In this section we derive a condition that bounds the drawdown risk.
Consider any $\lambda>0$ and bet $b$.
For any $\alpha\in(0,1)$ and $\beta\in(0,1)$ that satisfies
$\lambda = \log \beta/\log \alpha$ we have
\BEQ
\Expect (r^Tb)^{-\lambda} \leq 1 ~\Longrightarrow ~ \prob(\Wmin < \alpha) <
\beta. 
\label{e-bound-2}
\EEQ
In other words, if our bet satisfies $\Expect (r^Tb)^{-\lambda} \leq 1$,
then its drawdown risk $\prob(\Wmin < \alpha)$ less than $\beta$.

To see this, consider the stopping time
\[
\tau = \inf\{t\ge 1\,|\,w_t< \alpha\},
\]
and note $\tau<\infty$ if and only if $\Wmin< \alpha$.
From Lemma~\ref{lem-rw} of the appendix, we get
\begin{align*}
1\ge 
\Expect
\left[
\exp(-\lambda \log w_\tau - \tau \log \Expect(r^Tb)^{-\lambda} )
\mid
\tau<\infty
\right]
\prob(\Wmin< \alpha).
\end{align*}
Since
$-\tau \log \Expect(r^Tb)^{-\lambda} \ge 0$
when $\tau<\infty$,
we have
\[
1\ge
\Expect
\left[
\exp(-\lambda \log w_\tau )
\mid
\tau<\infty
\right]
\prob(\Wmin< \alpha).
\]
Since $w_\tau< \alpha$
when $\tau<\infty$, we have
\[
1>
\exp(-\lambda \log \alpha)
\prob(\Wmin< \alpha).
\]
So we have
\[
\prob(\Wmin< \alpha)< \alpha^\lambda=\beta.
\]

\section{Risk-constrained Kelly gambling}
\label{s-rck}
Replacing the drawdown risk constraint in the problem
\eqref{e-kelly-risk} with the lefthand side of
\eqref{e-bound-2}, with $\lambda = \log \beta/\log \alpha$,
yields the {risk-constrained Kelly gambling problem} (RCK)
\BEQ\label{e-kelly-risk-restr}
\begin{array}{ll}
\text{maximize} & \Expect \log (r^T b),\\
 \text{subject to} & \ones^T b = 1, \quad b \geq 0,\\
 & \Expect(r^Tb)^{-\lambda} \leq 1,
\end{array}
\EEQ
with variable $b$.
We refer to a solution of this problem as an RCK bet.
The RCK problem is a \emph{restriction} of 
problem~\eqref{e-kelly-risk}, since it has a smaller feasible set:
any $b$ that is feasible for RCK must satisfy the 
drawdown risk constraint $\prob(\Wmin < \alpha) < \beta$.
In the limit when either $\beta \to 1$
or $\alpha \to 0$ we get $\lambda \to 0$.  For $\lambda=0$,
the second constraint is always satisfied, 
and the RCK problem \eqref{e-kelly-risk-restr} 
reduces to the (unconstrained) Kelly gambling problem \eqref{e-kelly}.

Let us now show that the RCK problem \eqref{e-kelly-risk-restr} is convex.
The objective is concave and the constraints $\ones^Tb=1$, $b \geq 0$ are
convex.
To see that the function $\Expect (r^Tb)^{-\lambda}$ is convex in $b$,
we note that for $r^Tb>0$, 
$(r^Tb)^{-\lambda}$ is a convex function of $b$; so the expectation
$\Expect (r^Tb)^{-\lambda}$ is a convex function of $b$ (see 
\cite[\S 3.2]{boyd2004convex}).
We mention that the last constraint can also be written as 
$\log \Expect (r^Tb)^{-\lambda} \leq 0$, where the lefthand side is
a convex function of $b$.

The RCK problem \eqref{e-kelly-risk-restr} is 
always feasible, since $b=e_n$ is feasible.
Just as in the Kelly gambling problem,
the bet vector $b=e_n$ is optimal for RCK \eqref{e-kelly-risk-restr}
if and only if
$\Expect r_i \leq 1$ for $i=1, \ldots, n-1$.
In other words we should not bet at all
if all the bets are losers in expectation; conversely, 
if just one bet is a winner in expectation, the solution of the 
RCK problem will have positive
growth rate (and of course respect the drawdown risk constraint).
We show this in the appendix.


\subsection{Risk aversion parameter}\label{s-risk-aversion-par}
The RCK problem~\eqref{e-kelly-risk-restr}
 depends on the parameters $\alpha$ and $\beta$ only through
$\lambda = \log \beta/\log \alpha$.
This means that, for fixed $\lambda$, our one constraint
$ \Expect(r^Tb)^{-\lambda} \leq 1$
actually gives us a family of drawdown constraints
that must be satisfied:
\BEQ\label{e-cdf-risk-bound}
\prob(\Wmin < \alpha) < \alpha ^\lambda
\EEQ
holds for all $\alpha \in (0,1)$.
For example $\alpha = 0.7$ and $\beta = 0.1$ gives $\lambda = 6.46$;
thus, our constraint also implies that the probability of a drop
in wealth to $\alpha = 0.5$ (\ie, we lose half our initial wealth) 
is no more than $(0.5)^{6.46} = 0.011$.
Another interpretation of \eqref{e-cdf-risk-bound} 
is that our risk constraint actually bounds the entire
CDF (cumulative distribution function) of $\Wmin$: it stays 
below the function $\alpha \mapsto \alpha^\lambda$.

The RCK problem~\eqref{e-kelly-risk-restr}
 can be used in two (related) ways.
First, we can start from the original drawdown specifications,
given by $\alpha$ and $\beta$, and then solve the problem
using $\lambda = \log \beta/\log \alpha$.
In this case we are guaranteed that the resulting bet satisfies our drawdown constraint. 
An alternate use model is to consider $\lambda$ as a \emph{risk-aversion
parameter}; we vary it to trade off growth rate and drawdown risk.
This is very similar to traditional Markowitz portofolio optimization
\cite{markowitz1952portfolio} \cite[\S4.4.1]{boyd2004convex},
where a risk aversion parameter is used to trade off
risk (measured by portfolio return variance) and return (expected portfolio
return).
We will see another close connection between our method and 
Markowitz mean-variance portfolio optimization in \S\ref{s-quadratic}.

\subsection{Light and heavy risk aversion regimes}
In this section we give provide an interpretation of the drawdown risk 
constraint
\BEQ\label{e-kelly-constraint}
\Expect(r^Tb)^{-\lambda} \leq 1
\EEQ
in the light and heavy risk aversion regimes, which correspond to small
and large values of $\lambda$, respectively.

\paragraph{Heavy risk aversion.}
In the heavy risk aversion regime,
\ie, in the limit $\lambda\rightarrow \infty$,
the constraint \eqref{e-kelly-constraint}
reduces to $r^Tb \geq 1$ almost surely.
In other words, 
problem~\eqref{e-kelly-risk-restr} only considers risk-free bets
in this regime.

\paragraph{Light risk aversion.}
Next consider the light risk aversion regime,
\ie, in the limit $\lambda\rightarrow 0$.
Note that constraint \eqref{e-kelly-constraint} is equivalent to
\[
\frac{1}{\lambda}\log \Expect\exp(-\lambda\log(r^Tb)) \leq 0.
\]
As $\lambda\rightarrow 0$ we have
\begin{align*}
\frac{1}{\lambda}\log \Expect\exp(-\lambda\log(r^Tb))&=
\frac{1}{\lambda}\log \Expect\left[
1-\lambda \log(r^Tb)+\frac{\lambda^2}{2}(\log(r^Tb))^2+O(\lambda^3)\right]\\
&=
-\Expect\log(r^Tb)+\frac{\lambda}{2}\Expect(\log(r^Tb))^2
-\frac{\lambda}{2}(\Expect\log(r^Tb))^2
+O(\lambda^2)\\
&=
-\Expect\log(r^Tb)+\frac{\lambda}{2}\var \log(r^Tb) +O(\lambda^2).
\end{align*}
(In the context of stochastic control,
$(1/\lambda)\log \Expect_X\exp(-\lambda X)$
is known as the exponential disutility or loss
and $\lambda$ as the risk-sensitivity parameter.
This asymptotic expansion is well-known
see \eg\ \cite{whittle1981,whittle1990}.)
So constraint \eqref{e-kelly-constraint}
reduces to
\[
\frac{\lambda}{2}\var\log (r^Tb)
\le
\Expect\log (r^Tb)
\]
in the limit $\lambda \to 0$.
Thus the (restricted) drawdown risk contraint \eqref{e-kelly-constraint}
limits the ratio of variance to mean growth in this regime.


\subsection{Computing RCK bets}
\paragraph{Two outcomes.}
For the two outcome case we can easily solve the 
problem \eqref{e-kelly-risk-restr}, almost analytically.
The problem is
\BEQ\label{e-kelly-risk-two-outcomes}
\begin{array}{ll}
\text{maximize} & \pi \log (b_1 P + (1-b_1)) + (1-\pi)(1-b_1),\\
 \text{subject to} &0 \leq  b_1 \leq 1,\\
 &  \pi (b_1 P + (1-b_1))^{-\log \beta/\log \alpha} + (1-\pi)(1-b_1)^{-\log \beta/\log \alpha} \leq 1.
\end{array}
\EEQ
If the solution of the unconstrained problem,
\[
\left(\frac{\pi P - 1}{P-1},\frac{P-\pi P}{P-1}\right),
\]
satisfies the risk constraint, then it is the solution.
Otherwise we reduce $b_1$ to find the value for which
\[
\pi (b_1 P + (1-b_1))^{-\lambda} + (1-\pi)(1-b_1)^{-\log \lambda} = 1.
\]
(This can be done by bisection since the lefthand side is monotone in
$b_1$.)
In this case the RCK bet 
is a fractional Kelly bet \eqref{e-fractional-kelly}, for some $f<1$.

\paragraph{Finite outcomes case.}
For the finite outcomes case we can 
restate the RCK problem \eqref{e-kelly-risk-restr}
in a convenient and tractable form.
We first take the log of the last constraint and get
\[
\log \sum_{i=1}^K \pi_i (r_i^Tb)^{-\lambda} \leq 0,
\]
we then write it as 
\[
\log \left( \sum_{i=1}^K \exp (\log \pi_i - \lambda \log (r_i^Tb)) \right) \leq 0.
\]
To see that this constraint is convex we note that the log-sum-exp function
is convex and increasing, and its arguments are all convex functions of $b$
(since $\log (r_i^Tb)$ is concave), so the lefthand side
function is convex in $b$ \cite[\S 3.2]{boyd2004convex}.
Moreover, convex optimization software systems like CVX, CVXPY, and Convex.jl 
based on DCP (disciplined convex programming) can handle such compositions
directly.
Thus we have the problem
\BEQ
\label{e-problem-finite-outcome}
\begin{array}{ll}
\text{maximize} & \sum_{i=1}^K \pi_i \log (r_i^T b),\\
 \text{subject to} & \ones^T b  = 1, \quad b \geq 0,\\
 & \log \left( \sum_{i=1}^K \exp(\log \pi_i -\lambda \log (r_i^Tb) )\right) \leq 0.
\end{array}
\EEQ
In this form the problem is readily solved;
its CVXPY specification is given in appendix~\ref{s-dcp-spec}.

\paragraph{General returns.}
As with the Kelly gambling problem,
we can solve the RCK 
problem \eqref{e-kelly-risk-restr} using
a stochastic optimization method.
We use a primal-dual stochastic gradient method 
(from \cite{nemirovski1978,nemirovski2009robust})
applied to the Lagrangian
\BEQ\label{e-lagrangian}
L(b, \kappa) = -\Expect \log (r^T b) + \kappa (\Expect (r^T b)^{-\lambda} - 1),
\EEQ
with $b\in \Delta_\varepsilon$ and $\kappa \geq 0$.
(As in the unconstrained Kelly optimization case, we make the technical assumption
that $b_n^\star> \varepsilon>0$.)
In the appendix we show that 
the RCK problem \eqref{e-kelly-risk-restr} 
has an optimal dual variable $\kappa^\star$
for the constraint $\Expect (r^T b)^{-\lambda} \le 1$,
which implies that solving problem \eqref{e-kelly-risk-restr} is
equivalent to finding a saddle point of \eqref{e-lagrangian}.
We also assume we know an upper bound $M$ on the value of the
optimal dual variable $\kappa^\star$.

Our method computes the iterates
\begin{align*}
 \bar b^{(k+1)} =& \Pi \left( \bar b^{(k)} + t_k
\frac{ ({r^{(k)}}^T \bar b^{(k)})^{\lambda}+
\lambda \bar \kappa^{(k)}}{({r^{(k)}}^T \bar b^{(k)})^{\lambda+1}}
r^{(k)}
\right),\\
\bar \kappa^{(k+1)} =&
\left( \bar \kappa^{(k)} + t_k\frac{1-({r^{(k)}}^T \bar b^{(k)})^{\lambda}}{
({r^{(k)}}^T \bar b^{(k)})^{\lambda}}
 \right)_{[0,M]},
\end{align*}
where the starting points $\bar b^{(1)}$ and $\bar \kappa^{(1)}$ are respectively
in $\Delta_\varepsilon$ and $[0,M]$,
$r^{(k)}$ are IID samples from the distribution of $r$,
$\Pi$ is Euclidean projection onto $\Delta_\varepsilon$,
and $\left( a \right)_{[0,M]}$ is projection onto $[0,M]$,
\ie,
\[
\left( a \right)_{[0,M]} = \max \{0, \min \{ M,a \} \}.
\]
The step sizes $t_k>0$ must satisfy
\[
\qquad t_k\rightarrow 0,
\qquad
\sum^\infty_{k=1}t_k=\infty.
\]

We use the (weighted) running averages
\[
{b}^{(k)} = \frac{\sum_{i=1}^k t_i \bar b^{(i)}}{\sum_{i=1}^k t_i},
\qquad
{\kappa}^{(k)} = \frac{\sum_{i=1}^k t_i \bar \kappa^{(i)}}{\sum_{i=1}^k t_i}
\]
as our estimates of the optimal bet and $\kappa^\star$, respectively.

Again, this method can be slow to converge, but it always works;
that is $\Expect\log(r^T{b}^{(k)})$ converges to the optimal value
and $\max\{\Expect(r^T{b}^{(k)}))^{-\lambda}-1, 0\}\rightarrow 0$.
As in the unconstrained Kelly case, we do not know a priori how small
$\varepsilon$ should be, or how large $M$ should be.
We can choose a small $\varepsilon$ and large $M$
and later verify that $(\bar{b}^{(k)})_n> \varepsilon$,
and $\bar{\kappa}^{(k)}<M$;
if this holds, our guesses of $\varepsilon$ and $M$ were valid.
Also as in the unconstrained case, batching can be used to improve
the practical convergence.  In this case, we replace our 
unbiased estimates of the gradients of the two expectations
with an average over some number of them.

Finally, we mention that the optimality conditions can be 
independently checked.
As we show in Lemma~\ref{lem-opt-cond} of the appendix,
a pair $(b^\star,\kappa^\star)$ is a solution of the RCK problem
if and only if it satisfies the following optimality conditions:
\begin{align}\label{e-opt-cond}
&\ones^Tb^\star=1,\quad b^\star\ge 0,\quad\Expect (r^Tb^\star)^{-\lambda}\le 1\nonumber\\
&\kappa^\star\ge 0,\quad
\kappa^\star(\Expect (r^Tb^\star)^{-\lambda}- 1)=0\\
&\Expect \frac{r_i}{r^Tb^\star}+\kappa^\star\lambda \Expect\frac{r_i}{(r^Tb^\star)^{\lambda+1}}
~\left\{
\begin{array}{ll}
\le 1+\kappa^\star\lambda & b_i=0\\
= 1+\kappa^\star\lambda & b_i>0.
\end{array}\right. \nonumber
\end{align}
These conditions can be checked for a computed approximate solution
of RCK, using Monte Carlo simulation to evaluate the expectations.
(The method above guarantees that $\ones^Tb=1$, $b\geq 0$, and
$\kappa \geq 0$, so we only need to check the other 
three conditions.)

\section{Quadratic approximation} \label{s-quadratic}
In this section we form a quadratic approximation of the 
RCK problem~\eqref{e-kelly-risk-restr},
which we call the \emph{quadratic RCK problem} (QRCK),
and derive a close connection to Markowitz portfolio optimization.
We use the notation $\rho=r-\ones$ for the (random) excess 
return, so (with $\ones^Tb=1$) we have $r^Tb-1 = \rho^Tb$.
Assuming $r^Tb\approx 1$, or equivalently $\rho^Tb \approx 0$,
we have the (Taylor) approximations
\begin{eqnarray*}
\log (r^Tb) &=& \rho^T b -\frac{1}{2}(\rho^Tb)^2 + O((\rho^Tb)^3),\\
(r^Tb)^{-\lambda}
&=& 1 - \lambda \rho^Tb + \frac{\lambda(\lambda+1)}{2}
(\rho^Tb )^2 + O((\rho^Tb)^3).
\end{eqnarray*}
Substituting these into the RCK problem 
\eqref{e-kelly-risk-restr} we obtain the QRCK problem
\begin{equation}\label{e-qp-approx}
\begin{array}{ll}
\mbox{maximize}&\Expect \rho^Tb-\frac{1}{2}\Expect(\rho^Tb)^2\\
\mbox{subject to}& \ones^Tb = 1, \quad b \geq 0\\
&-\lambda \Expect \rho^Tb + \frac{\lambda(\lambda+1)}{2} \Expect (\rho^Tb)^2
\leq 0.
\end{array}
\end{equation}
This approximation of the RCK problem
is a convex quadratic program (QP) which is readily solved.
We expect the solution to be a good approximation of the
RCK solution when the basic assumption $r^Tb \approx 1$ holds.

This approximation can be useful when finding 
a solution to 
the RCK problem~\eqref{e-kelly-risk-restr}.
We first 
estimate the first and second moments of $\rho$ via Monte Carlo,
and then solve the QRCK problem~\eqref{e-qp-approx} (with the estimated moments)
to get a solution $ b^\mathrm{qp}$ and
a Langrange multipler $ \kappa^\mathrm{qp}$.
We take these as good approximations for the solution 
of \eqref{e-kelly-risk-restr}, and use them as the starting points for 
the primal-dual stochastic gradient method,
\ie, we set $b^{(1)}=b^\mathrm{qp}$ and $\kappa^{(1)}= \kappa^\mathrm{qp}$.
This gives no theoretical
advantage, since the method converges no matter
what the initial points are; but it can speed up the convergence in 
practice.

We now connect the QRCK problem~\eqref{e-qp-approx}
to classical Markowitz portfolio
selection.  We start by defining $\mu = \Expect \rho$, the mean excess return,
and
\[
S = \Expect \rho\rho^T = \Sigma + (\Expect \rho)(\Expect \rho)^T,
\]
the (raw) second moment of $\rho$ (with $\Sigma$ the covariance of the return).
We say that an allocation vector $b$ is a Markowitz portfolio if it solves
\BEQ
\begin{array}{ll}\label{e-qp-markowitz}
\mbox{maximize}&\mu^Tb-\frac{\gamma}{2}b^T \Sigma b\\
\mbox{subject to}&  \ones^Tb = 1, \quad b \geq 0,
\end{array}
\EEQ
for some value of the (risk-aversion) parameter $\gamma \geq 0$.
A solution to problem \eqref{e-qp-approx} is a Markowitz portfolio,
provided there are no arbitrages. 
By no arbitrage we mean that $\mu^Tb>0$, $\ones^Tb=1$, and $b\ge 0$, implies
$b^T\Sigma b>0$.

Let us show this.
Let $b^\mathrm{qp}$ be the solution to the 
QRCK problem \eqref{e-qp-approx}.
By (strong) Lagrange duality \cite{bertsekas2009}, $b^\mathrm{qp}$ is a solution of
\begin{equation*}\label{e-qp-approx2}
\begin{array}{ll}
\mbox{maximize}&\mu^Tb-\frac{1}{2}b^TSb+\nu (\mu^Tb-\frac{\lambda+1}{2}b^TSb)\\
\mbox{subject to}& \ones^T b =1, \quad b \geq 0
\end{array}
\end{equation*}
for some $\nu\ge 0$, which we get by dualizing only the constraint
$- \lambda\Expect \rho^Tb + \frac{\lambda(\lambda+1)}{2} \Expect (\rho^Tb)^2\le 0$.
We divide the objective by $1+\nu$ and substitute $S=\mu\mu^T+\Sigma$
to get that $b^\mathrm{qp}$ is a solution of
\begin{equation}\label{e-qp-approx3}
\begin{array}{ll}
\mbox{maximize}&\mu^Tb-\frac{\eta}{2}(\mu^Tb)^2-\frac{\eta}{2}b^T \Sigma b\\
\mbox{subject to}& \ones^T b =1, \quad b \geq 0,
\end{array} 
\end{equation}
for some $\eta>0$.
In turn, $b^\mathrm{qp}$ is a solution of
\BEQ\label{e-qp-lin}
\begin{array}{ll}
\mbox{maximize}&
(1-\eta \mu^Tb^\mathrm{qp})\mu^Tb-\frac{\eta}{2}b^T\Sigma b\\
\mbox{subject to}
&\ones^T b =1, \quad b \geq 0,
\end{array}
\EEQ
since the objectives of problem \eqref{e-qp-approx3} and \eqref{e-qp-lin}
have the same gradient at $b^\mathrm{qp}$.
If $\mu^Tb^\mathrm{qp}<1/\eta$ then problem \eqref{e-qp-lin} is equivalent to
problem \eqref{e-qp-markowitz} with $\gamma=\eta/(1-\eta \mu^Tb^\mathrm{qp})$.

Assume for contradiction that $\mu^Tb^\mathrm{qp}\ge 1/\eta$, which implies 
$(b^\mathrm{qp})^T\Sigma b^\mathrm{qp}> 0$ by the no artibtrage assumption.
Then for problem \eqref{e-qp-lin} the bet $e_n$ achieves objective value $0$,
which is better than that of $b^\mathrm{qp}$. As this contradicts the optimality of $b^\mathrm{qp}$,
we have $\mu^Tb^\mathrm{qp}<1/\eta$. So we conclude a solution to problem \eqref{e-qp-approx}
is a solution to problem \eqref{e-qp-markowitz}, \ie, $b^\mathrm{qp}$ is a Markowitz portfolio.

\section{Numerical simulations}\label{s-examples}
In this section we report results for two specific problem instances, 
one with finite outcomes
and one with infinite outcomes, but our numerical
explorations show that these results are typical.

\subsection{Finite outcomes}
We consider a finite outcomes case
with $n=20$ (so there are $19$ risky bets) and $K = 100$ possible outcomes.
The problem data is generated as follows.
The probabilities $\pi_i$, $i = 1, \ldots, K$
are drawn uniformly on $[0,1]$ and then normalized 
so that $\sum_i \pi_i = 1$.
The returns $r_{ij} \in \reals_{++}$ for $i = 1,\ldots, K$ and $j = 1,\ldots,n-1$ 
are drawn from a uniform distribution in $[0.7, 1.3]$. 
Then, 30 randomly selected returns $r_{ij}$ 
are set equal to $0.2$ and other 30 equal to $2$. 
(The returns $r_{in}$ for $i = 1,\ldots, K$ are instead all set to 1.)
The probability that a return vector $r$
contains at least one ``extreme'' return (\ie, equal to $0.2$ or $2$) 
is $1 - (1 - 60/(19\cdot100))^{n-1} \approx 0.45$.

\subsubsection{Comparison of Kelly and RCK bets}
We compare the Kelly optimal bet with
the RCK bets for $\alpha = 0.7$ and $\beta = 0.1$ ($\lambda = 6.456$).
We then obtain the RCK bets for $\lambda = 5.500$,
a value chosen so that we
achieve risk close to the 
specified value $\beta = 0.1$
(as discussed in \S\ref{s-risk-aversion-par}).
For each bet vector we carry out 10000 Monte Carlo 
simulations of $w_t$ for $t=1, \ldots, 100$.
This allows us to estimate (well) the associated risk probabilities.
Table \ref{tab-results-montecarlo} shows the results.
The second column gives the growth rate, the third column gives the
bound on drawdown risk, and the last column gives the
drawdown risk computed by Monte Carlo simulation.
\begin{table}
\begin{center}
\begin{tabular}{l|c|c|c}
Bet & $\Expect \log (r^Tb)$ & $e^{\lambda \log \alpha}$ &
$\prob(\Wmin < \alpha)$\\ \hline
Kelly  & 0.062  & -  & 0.397 \\
RCK, $\lambda =6.456$  & 0.043  & 0.100  & 0.073 \\
RCK, $\lambda =5.500$  & 0.047  & 0.141  & 0.099 \\
\end{tabular}
\caption{Comparison of Kelly and RCK bets. 
Expected growth rate
and drawdown risk are computed with Monte Carlo simulations.}
\label{tab-results-montecarlo}
\end{center}
\end{table}

The Kelly optimal bet experiences a drawdown exceeding our 
threshold $\alpha = 0.7$ around 40\% of the time.
For all the RCK bets the drawdown risk (computed by Monte Carlo) 
is less than our bound, but not dramatically so.
(We have observed this over a wide range of problems.)
The RCK bet with $\lambda = 6.456$ is \emph{guaranteed} to have
a drawdown probability not exceeding $10\%$; Monte Carlo simulation
shows that it is (approximately) $7.3\%$.
For the third bet vector in our comparison, we decreased
the risk aversion parameter
until we obtained a bet with (Monte Carlo computed) risk
near the limit $10\%$.

The optimal value of
the (hard) Kelly gambling problem with 
drawdown risk \eqref{e-kelly-constraint} 
must be less than
$0.062$ (the unconstrained optimal growth
rate) and greater than
$0.043$ (since our second bet vector is guaranteed to satisfy 
the risk constraint).
Since our third bet vector has drawdown risk less than $10\%$,
we can further refine this result to state that
the optimal value of
the (hard) Kelly gambling problem with 
drawdown risk \eqref{e-kelly-constraint} 
is between $0.062$ and $0.047$.



Figure \ref{fig-time-domain-trajectories} shows ten
trajectories of $w_t$ in our Monte Carlo simulations for 
the Kelly optimal bet (left) and the RCK bet obtained with $\lambda = 5.5$
(right).
Out of these ten simulations, four of the Kelly trajectories dip
below the threshold $\alpha = 0.7$, and one of the other trajectories
does, which is consistent with the probabilities reported above.

\begin{figure}
\begin{center}
\includegraphics[width=1.\textwidth]{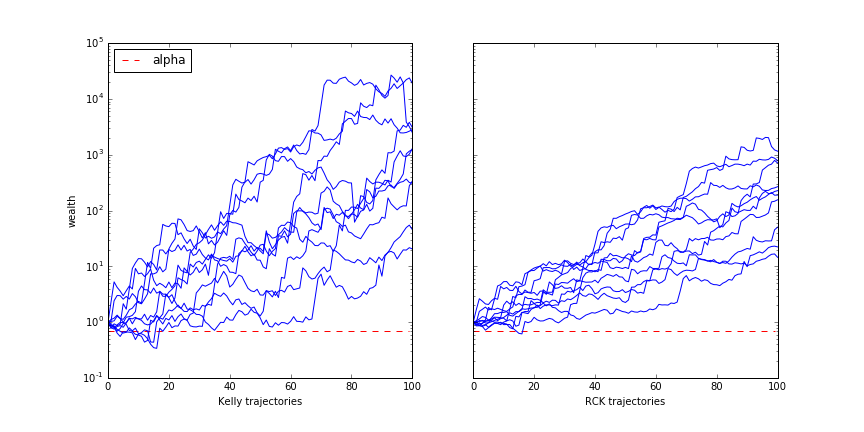}
\end{center}
\caption{Wealth trajectories for the Kelly optimal bet (left)
and the restricted risk-constrained bet with $\lambda=5.5$.
The dashed line shows the wealth threshold $\alpha = 0.7$.}
\label{fig-time-domain-trajectories}
\end{figure}

Figure \ref{fig-empirical-cdf} shows the sample CDF of $\Wmin$
over the 10000 simulated trajectories of $w_t$,
for the Kelly optimal bets
and the RCK bets with $\lambda = 6.46$.
The upper bound $\alpha^\lambda$ is also shown.
We see that the risk bound is not bad, typically around
30\% or so higher than the actual risk. 
We have observed this to be the case across many problem instances.
\begin{figure}
\begin{center}
\includegraphics[width=.7\textwidth]{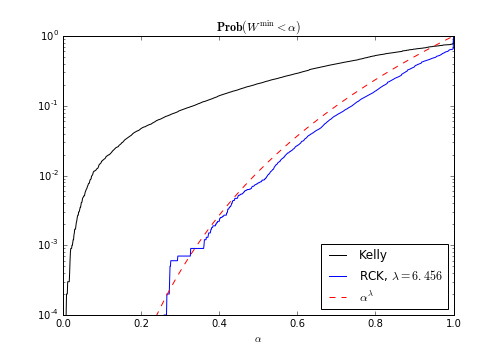}
\end{center}
\caption{Sample CDF of $\Wmin$ 
(\ie, $\prob(\Wmin < \alpha)$) 
for the Kelly optimal bets and RCK bets 
with $\lambda = 6.46$.
The upper bound $\alpha^\lambda$ is also shown.
}
\label{fig-empirical-cdf}
\end{figure}

\subsubsection{Comparison of RCK and QRCK}
Table \ref{tab-results-montecarlo-qrck} shows the Monte Carlo values of 
growth rate and drawdown risk for the QRCK bets with $\lambda = 6.456$
(to compare with the RCK solution in table \ref{tab-results-montecarlo}).
The QRCK bets come with no guarantee on the drawdown risk,
but with $\lambda = 6.456$ the drawdown probability (evaluated by
Monte Carlo) is less than $\beta=0.1$.
The value $\lambda = 2.800$ is selected so that 
the risk is approximately $0.10$; we see that
its growth rate is smaller than 
the growth rate of the RCK bet with the same drawdown risk.
\begin{table}
\begin{center}
\begin{tabular}{l|c|c|c}
Bet & $\Expect \log (r^Tb)$ & $e^{\lambda \log \alpha}$ &
$\prob(\Wmin < \alpha)$\\ \hline
QRCK, $\lambda = 0.000$ & 0.054  & 1.000  & 0.218 \\ 
QRCK, $\lambda = 6.456$ & 0.027  & 0.100  & 0.025 \\
QRCK, $\lambda = 2.800$ & 0.044  & 0.368  & 0.100 \\
\end{tabular}
\caption{Statistics for QRCK bets. 
Expected growth rate
and drawdown risk are computed with Monte Carlo simulations.}
\label{tab-results-montecarlo-qrck}
\end{center}
\end{table} 

Figure \ref{fig-rcq-qrck-positions-comparison} shows the values of each $b_i$ with $i = 1, \ldots, 20$,
 for the Kelly, RCK, and QRCK bets. 
We can see that the Kelly bet concentrates on outcome~4; the RCK and QRCK 
bets still make a large bet on outcome~4, but also spread their bets 
across other outcomes as well.

\begin{figure}
\begin{center}
\includegraphics[width=1.\textwidth]{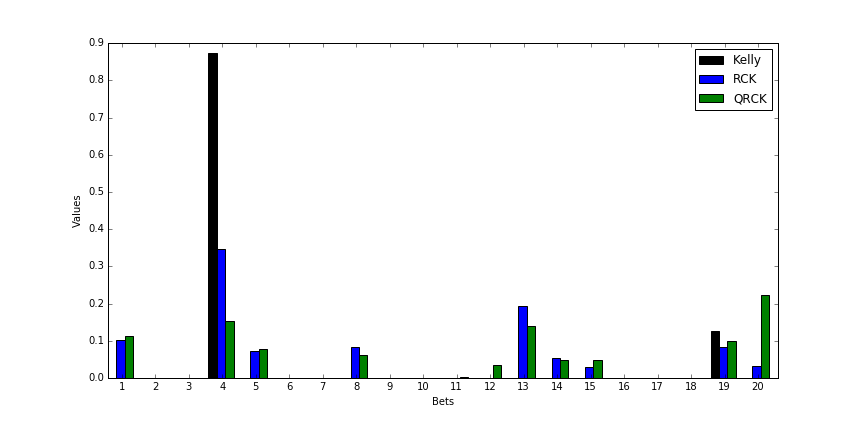}
\end{center}
\caption{Comparison of the values $b_i$, $i = 1, \ldots, 20$, for different bet vectors. 
The RCK and QRCK bets are both obtained with $\lambda = 6.456$.}
\label{fig-rcq-qrck-positions-comparison}
\end{figure}

\subsubsection{Risk-growth trade-off of RCK, QRCK, and fractional Kelly bets}
In figure \ref{fig-risk-restr-frac} we compare the 
trade-off between drawdown risk and expected growth rate
of the RCK and QRCK problems for multiple choices of $\lambda$
and the fractional Kelly bets \eqref{e-fractional-kelly} for multiple choices of $f$.
The plots are akin to the risk-return tradeoff curves that are
typical in Markowitz portfolio optimization. We see that RCK
 yields superior bets than QRCK and fractional Kelly 
(in some cases substantially better).  For example, the Kelly 
fractional bet that achieves our risk bound $0.1$ has a growth rate
around $0.035$, compared with RCK, which has a growth rate $0.047$.
\begin{figure}
\begin{center}
\includegraphics[width=.7\textwidth]{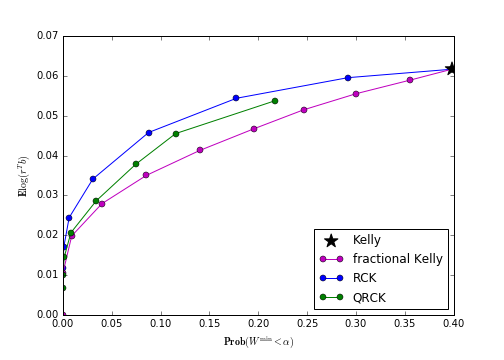}
\end{center}
\caption{Trade-off of drawdown risk and expected growth rate
for the RCK, QRCK, and fractional Kelly bets. The
Kelly bet is also shown.}
\label{fig-risk-restr-frac}
\end{figure}

\FloatBarrier

\subsection{General returns}
We show here an instance of the problem with an infinite distribution of returns,
defined as a mixture of lognormals
\[
r \sim 
\begin{cases}
\log \mathcal{N}(\nu_1, \Sigma_1) & \mbox{w.p. } 0.5\\ 
\log \mathcal{N}(\nu_2, \Sigma_2) & \mbox{w.p. } 0.5,
\end{cases}
\]
with $n = 20$, $\nu_1, \nu_2 \in \reals^n$, and $\Sigma_1, \Sigma_2 \in \symm_{+}^n$. 
We have $\nu_{1n} = \nu_{2n} = 0$, and
the matrices $\Sigma_1, \Sigma_2$ are such that $r_n$ has value 1 with probability 1.

We generate a sample of $10^6$ observations from this returns distribution and use it to solve the
Kelly, RCK, and QRCK problems. In the algorithms for solving Kelly and RCK we use
step sizes $t_k = C/ \sqrt{k}$ for some $C>0$ and batching over $100$ samples per iteration 
(so that we run $10^4$ iterations). 
We initialize these algorithms with the QRCK solutions to speed up convergence. 
To solve the QRCK problem we compute the first and second moments of $\rho = r-1$ 
using the same sample of $10^6$ observations.
We generate a separate sample of returns for Monte Carlo simulations, 
consisting of $10^4$ simulated trajectories of $w_t$ for $t = 1, \ldots, 100$.

\subsubsection{Comparison of RCK and Kelly bets}
In our first test we compare the Kelly bet with RCK bets for two values of the 
parameter $\lambda$. The first RCK bet has $\lambda = 6.456$, which guarantees
that the drawdown risk at $\alpha = 0.7$ is smaller or equal than $\beta = 0.1$. 
Table \ref{tab-results-montecarlo-infinite}
shows that this is indeed the case, the Monte Carlo simulated risk is $0.08$,
not very far from the theoretical bound. The second value of $\lambda$ is instead chosen
so that the Monte Carlo risk is approximately equal to $0.1$.
\begin{table}
\begin{center}
\begin{tabular}{l|c|c|c}
Bet & $\Expect \log (r^Tb)$ & $e^{\lambda \log \alpha}$ &
$\prob(\Wmin < \alpha)$\\ \hline
Kelly  & 0.077  & -  & 0.569 \\
RCK, $\lambda = 6.456$  & 0.039  & 0.100  & 0.080 \\
RCK, $\lambda = 5.700$  & 0.043  & 0.131  & 0.101\\
\end{tabular}
\caption{Comparison of Kelly and RCK for the infinite outcome case.
Expected growth rate
and drawdown risk are computed with Monte Carlo simulations.}
\label{tab-results-montecarlo-infinite}
\end{center}
\end{table}

Figure \ref{fig-time-domain-trajectories-infinite} shows 10 simulated 
trajectories $w_t$ for the Kelly bet and for the RCK bet with $\lambda = 5.700$.
In this case 6 of the Kelly trajectories fall below $\alpha = 0.7$ and 
one of the RCK trajectories does, consistently with the values obtained above. 
\begin{figure}
\begin{center}
\includegraphics[width=1.\textwidth]{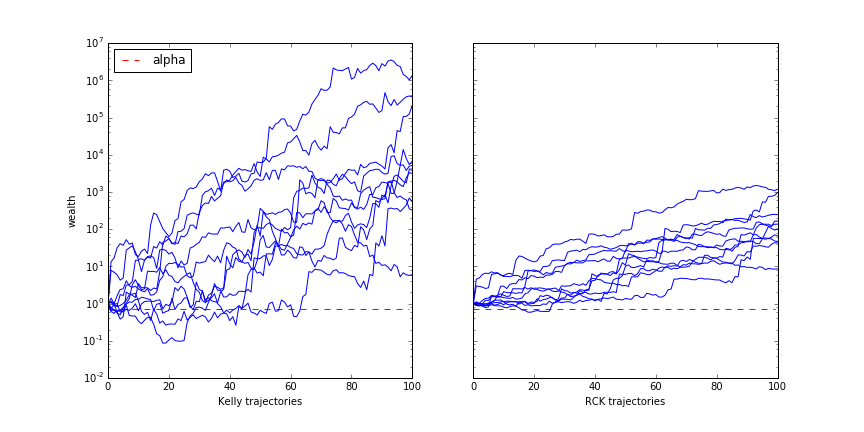}
\end{center}
\caption{Wealth trajectories for the Kelly optimal bet (left)
and the RCK bet with $\lambda=5.700$,
for the infinite returns distribution.
The wealth threshold $\alpha = 0.7$ is also shown.}
\label{fig-time-domain-trajectories-infinite}
\end{figure}
Figure \ref{fig-empirical-cdf-infinite} shows the sample CDF of $\Wmin$
for the Kelly bet and the RCK bet with $\lambda = 6.456$, and the theoretical bound given by $\alpha^\lambda$. 
\begin{figure}
\begin{center}
\includegraphics[width=.7\textwidth]{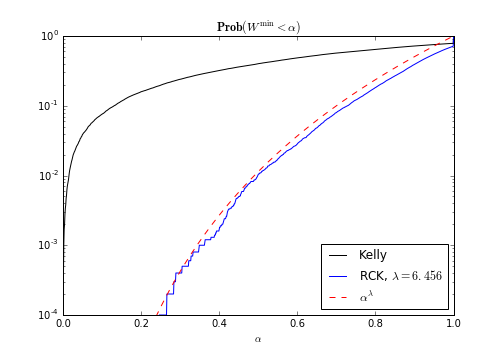}
\end{center}
\caption{Sample CDF of $\Wmin$ 
for the Kelly optimal bet and RCK bet
with $\lambda = 6.46$, for the infinite returns distribution.
The bound $\alpha^\lambda$ is also shown.}
\label{fig-empirical-cdf-infinite}
\end{figure}

\subsubsection{Risk-growth trade-off of RCK, QRCK, and fractional Kelly bets}
We compare the Monte Carlo simulated drawdown risk ($\alpha = 0.7$)
and expected growth rate
of the Kelly, RCK, QRCK, and fractional Kelly bets.
We select multiple values of $\lambda$ (for RCK and QRCK)
and $f$ (for fractional Kelly) and plot them together in 
figure \ref{fig-frontiers-infinite}. 
We observe, as we did in the finite outcome case, that RCK yields superior
bets than QRCK. This is particularly significant since QRCK is closely 
connected to the classic Markowitz portfolio optimization model, and this 
example resembles a financial portfolio selection problem (a bet with 
infinite return distributions). 
The fractional Kelly bets, in this case, show instead
the (essentially) same performance as RCK.
\begin{figure}
\begin{center}
\includegraphics[width=.7\textwidth]{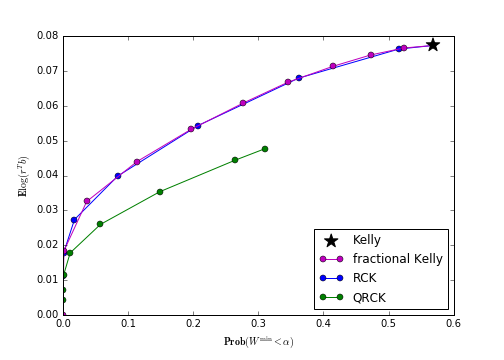}
\end{center}
\caption{Trade-off of drawdown risk and expected growth rate
for the RCK, QRCK, and fractional Kelly bets, with the infinite return
distribution. The Kelly bet is also shown.}
\label{fig-frontiers-infinite}
\end{figure} 



\clearpage

\appendix

\section{Miscellaneous lemmas}
Here we collect the technical details and derivations of several results 
from the text.


\begin{lem}\label{lem-loser-bet}
The bet $e_n$ is 
optimal for problem~\eqref{e-kelly}
if and only if
$\Expect r_i \leq 1$ for $i=1, \ldots, n-1$.
Likewise, the bet $e_n$ is 
optimal for problem~\eqref{e-kelly-risk-restr}
if and only if
$\Expect r_i \leq 1$ for $i=1, \ldots, n-1$.
\end{lem}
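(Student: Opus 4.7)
The plan is to handle both problems in a unified way, since their feasible sets both contain $b=e_n$ (which gives $r^Tb=1$ and objective value $0$), and the RCK feasible set is a subset of the Kelly feasible set.

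For the ``if'' direction, I would argue as follows. Suppose $\Expect r_i \le 1$ for $i=1,\ldots,n-1$ (and recall $\Expect r_n=1$). By Jensen's inequality applied to the concave function $\log$,
\[
\Expect\log(r^Tb) \le \log\Expect(r^Tb) = \log\Bigl(\sum_{i=1}^n b_i\Expect r_i\Bigr)\le \log\Bigl(\sum_{i=1}^n b_i\Bigr)=\log 1=0
\]
for any $b\ge 0$ with $\ones^Tb=1$. Since $e_n$ achieves objective value $0$ and is feasible for both problems (note $\Expect(r^Te_n)^{-\lambda}=1\le 1$), it is optimal for problem~\eqref{e-kelly}. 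Because the RCK feasible set is contained in the Kelly feasible set, $e_n$ is also optimal for problem~\eqref{e-kelly-risk-restr}.

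For the ``only if'' direction, I would argue by contrapositive: assume $\Expect r_j>1$ for some $j\le n-1$, and produce a feasible improvement over $e_n$ in both problems. Consider the path $b(t)=(1-t)e_n+te_j$ for small $t\in(0,1/2]$, so that $r^Tb(t)=1+t(r_j-1)\in[1/2,\infty)$ almost surely. The Kelly objective along this path is $f(t)=\Expect\log(1+t(r_j-1))$, and I would show $f'(0^+)=\Expect r_j-1>0$; the RCK constraint function is $g(t)=\Expect(1+t(r_j-1))^{-\lambda}$ with $g(0)=1$ and $g'(0^+)=-\lambda(\Expect r_j-1)<0$. Hence for small $t>0$ we get $f(t)>0=f(0)$ while $g(t)<1$, so $b(t)$ is strictly better than $e_n$ and feasible for both problems, contradicting optimality of $e_n$.

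The main technical obstacle is justifying the two derivative computations at $t=0^+$ under the standing assumption $\Expect r_i<\infty$ (and no other integrability). For $f'(0^+)$, since $t\mapsto\log(1+t(r_j-1))/t$ is monotone in $t$ with limit $r_j-1$ and is dominated on $t\in(0,1/2]$ by $r_j-1$ from above and by $2\log((1+r_j)/2)\le r_j-1$ from below, dominated convergence yields $f'(0^+)=\Expect r_j-1$. For $g'(0^+)$, on $t\in[0,1/2]$ the integrand $(1+t(r_j-1))^{-\lambda-1}|r_j-1|$ is bounded by $2^{\lambda+1}|r_j-1|$, which is integrable, so differentiation under the expectation is legitimate and gives $g'(0^+)=-\lambda(\Expect r_j-1)$.
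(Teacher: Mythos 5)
Your proof is correct. The ``if'' direction and the Kelly ``only if'' direction coincide with the paper's argument (Jensen's inequality, then a one-sided derivative of $f\mapsto\Expect\log(r^T(fe_j+(1-f)e_n))$ at $0$). Where you genuinely diverge is the RCK ``only if'' direction. The paper perturbs $e_n$ toward the \emph{Kelly optimal} bet $b^\star$, shows $\partial_+\Expect(r^T(fb^\star+(1-f)e_n))^{-\lambda}\big|_{f=0}=-\lambda(\Expect r^Tb^\star-1)<0$ so the perturbation is strictly feasible, and then invokes concavity of the objective to conclude the perturbed bet beats $e_n$; this requires having already established existence of $b^\star$ with $\Expect\log(r^Tb^\star)>0$. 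You instead stay on the single ray toward $e_j$ and verify \emph{both} that the objective strictly increases ($f'(0^+)=\Expect r_j-1>0$) and that the risk constraint is strictly satisfied ($g'(0^+)=-\lambda(\Expect r_j-1)<0$) for small $t$. Your route is more self-contained: it never needs the Kelly optimizer to exist or the concavity step, and it makes explicit (via the bounds $2\log((1+r_j)/2)\le\log(1+t(r_j-1))/t\le r_j-1$ and $(1+t(r_j-1))^{-\lambda-1}\le 2^{\lambda+1}$ on $t\in(0,1/2]$) the dominated-convergence justification that the paper delegates to a citation of Bertsekas. The only blemishes are cosmetic: your lower dominating function needs the remark that it is bounded below by $-2\log 2$ (hence integrable) rather than the comparison $2\log((1+r_j)/2)\le r_j-1$, and the bound on the derivative of the risk integrand should carry a factor $\lambda$; neither affects validity.
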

\begin{proof}
Assume $\Expect r_i\le 1$ for $i=1,\dots,n-1$.
Then by Jensen's inequality,
\[
\Expect \log (r^Tb)\le \log \Expect r^Tb\le 0
\]
for any $b\in \{b \mid \ones^Tb=1,~b \geq 0\}$.
So $e_n$, which achieves objective value $0$, is optimal
for both problem~\eqref{e-kelly}
and problem~\eqref{e-kelly-risk-restr}.

Next assume $\Expect r_i>1$ for some $i$.
Consider the bet $b=fe_i+(1-f)e_n$  and
\[
\phi(f)=\Expect\log(r^Tb)
\]
for $f\in[0,1)$.
By \cite[Proposition~2.1]{bertsekas1973} we have the right-sided derivative
\[
\partial_+\phi(f)=\Expect\frac{r(e_i-e_n)}{r^Tb},
\]
and we have
\[
\partial_+\phi(0)=\Expect r(e_i-e_n)=\Expect r_i-1>0.
\]
Since $\phi(0)=0$, we have $\phi(f)>0$ for a small enough $f$.
So $b=fe_i+(1-f)e_n$ has better objective value than $e_n$ for small enough $f$,
\ie, $e_n$ is not optimal for problem~\eqref{e-kelly}.

Again, assume $\Expect r_i>1$ for some $i$.
Write $b^\star$ for the Kelly optimal bet. (We have already established that
$b^\star\ne e_n$ and that $\Expect \log (r^Tb^\star)>0$.)
Consider the bet $\tilde{b} = fb^\star + (1-f) e_n$ and
\[
\psi(f)= \Expect (r^T\tilde{b}) ^{-\lambda}
\]
for $f \in [0,1)$.
By a similar argument as before, we have
\[
\partial_+\psi(f) = -\lambda
\Expect (r^T\tilde{b})^{-\lambda -1} r^T(b^\star - e_n),
\]
and
\[
\partial_+\psi(0) = -\lambda \left( \Expect (r^T b^\star)  - 1\right).
\]
By Jensen's inequality we have
\[
0\le\Expect\log(r^Tb^\star)\le \log\Expect r^Tb^\star.
\]
So $\partial_+\psi(0)<0$.
Since $\psi(0)=1$, we have $\psi(f)<1$ for small enough $f$.
So $\tilde{b} = fb^\star + (1-f) e_n$ is feasible for small enough $f$.
Furthermore, $\tilde{b}$ has strictly better objective value than $e_n$ for $f\in (0,1)$
since the objective is concave.
So $e_n$ is not optimal.
\end{proof}

\begin{lem}\label{lem-dual-sol}
Problem \eqref{e-kelly-risk-restr} always has an optimal dual variable.
\end{lem}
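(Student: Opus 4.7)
The strategy is to verify Slater's condition for the single nonlinear inequality $\Expect(r^Tb)^{-\lambda}\le 1$ and invoke the standard fact that a convex program with a Slater point has an attained optimal dual variable \cite[\S5.2.3]{boyd2004convex}; the remaining constraints $\ones^Tb=1$ and $b\ge 0$ are affine/polyhedral and need no separate qualification. The natural Slater candidate $e_n$ fails because $r^Te_n=1$ almost surely forces $\Expect(r^Te_n)^{-\lambda}=1$ with equality, so I would split along the same dichotomy used in Lemma~\ref{lem-loser-bet}.

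In the first case, $\Expect r_i\le 1$ for every $i=1,\dots,n-1$, Lemma~\ref{lem-loser-bet} already makes $b^\star=e_n$ optimal with optimal value $0$. Here I would exhibit $\kappa^\star=0$ as an optimal dual variable directly: the dual function
\[
g(\kappa)=\inf_{\ones^Tb=1,\,b\ge 0} L(b,\kappa)
\]
from the Lagrangian \eqref{e-lagrangian} satisfies $g(0)=-\sup_b \Expect\log(r^Tb)=0$ in this case, which matches the primal optimum, and weak duality closes the chain, making $\kappa^\star=0$ an optimal dual variable.

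In the second case, $\Expect r_i>1$ for some $i$, I would reuse the right-derivative calculation from the second half of the proof of Lemma~\ref{lem-loser-bet}. Writing $b^\star_K$ for the Kelly optimum and $\psi(f)=\Expect(r^T(fb^\star_K+(1-f)e_n))^{-\lambda}$, that calculation already gives $\partial_+\psi(0)=-\lambda(\Expect r^Tb^\star_K-1)$, and Jensen applied to the strict inequality $\Expect\log(r^Tb^\star_K)>0$ (also from Lemma~\ref{lem-loser-bet}) yields $\Expect r^Tb^\star_K>1$, so $\partial_+\psi(0)<0$. Since $\psi(0)=1$, any sufficiently small $f>0$ produces $\tilde b$ with $\Expect(r^T\tilde b)^{-\lambda}<1$, which is the required Slater point, and dual attainment follows from the standard reference.

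The main obstacle is the degenerate first case, where the only obviously feasible point sits on the boundary of the inequality and Slater is simply not available; the workaround is to abandon the constraint-qualification route there and write the optimal pair $(e_n,0)$ down by hand, using weak duality in place of Slater to certify optimality of $\kappa^\star$.
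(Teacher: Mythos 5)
Your proof is correct and follows essentially the same route as the paper's: the same dichotomy on whether $e_n$ is optimal, taking $\kappa^\star=0$ in the degenerate case where all $\Expect r_i\le 1$, and otherwise reusing the $\psi(f)=\Expect(r^T(fb^\star+(1-f)e_n))^{-\lambda}$ computation from Lemma~\ref{lem-loser-bet} to produce a strictly feasible point for the nonlinear constraint and then invoking a standard Slater-type dual-attainment theorem (the paper cites Rockafellar where you cite Boyd--Vandenberghe). The only cosmetic difference is that you spell out the weak-duality certification of $\kappa^\star=0$ in the first case, which the paper leaves implicit.
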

\begin{proof}
Assume $e_n$ is optimal. By Lemma~\ref{lem-loser-bet}, 
$e_n$ is optimal even without the constraint $\Expect (r^Tb)^{-\lambda}\le 1$.
So $\kappa^\star=0$ is an optimal dual variable.

Now assume $e_n$ is not optimal.
By the reasoning with the function $\psi$ of Lemma~\ref{lem-loser-bet}'s proof,
there is a point strictly feasible with respect to the 
constraint $\Expect (r^Tb)^{-\lambda}\le 1$.
So by \cite[Theorem~17]{rockafellar1974}, we conclude a dual solution exists.
\end{proof}

\begin{lem}\label{lem-proj}
Let $\varepsilon\in[0,1]$,
and write $\Pi(\cdot)$ for the projection onto $\Delta_\varepsilon$.
Define the function $(\cdot)_{+,\varepsilon}:\reals^n\rightarrow\reals^n$ as
$((x)_{+,\varepsilon})_i=\max\{x_i,0\}$  for $i=1,\dots,n-1$
and
$((x)_{+,\varepsilon})_n=\max\{x_n,\varepsilon\}$.
Then
\[
\Pi(z)=(z-\nu\ones)_{+,\varepsilon},
\]
where $\nu$ is a solution of the equation
\[
\ones^T(z-\nu\ones)_{+,\varepsilon}=1.
\]
The left-hand side is a nonincreasing function of $\nu$,
so $\nu$ can be obtained efficiently via bisection on 
with the starting interval $[\max_iz_i-1,\max_iz_i]$.
\end{lem}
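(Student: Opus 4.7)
The plan is to set up the projection as the QP
\[
\text{minimize } \tfrac{1}{2}\|b-z\|_2^2 \quad \text{subject to } \ones^T b = 1,\ b_i\ge 0,\ i=1,\dots,n-1,\ b_n\ge \varepsilon,
\]
and read off the claimed formula from the KKT conditions. Introduce a multiplier $\nu\in\reals$ for the equality constraint and multipliers $\mu_1,\dots,\mu_{n-1}\ge 0$ for $b_i\ge 0$ and $\mu_n\ge 0$ for $b_n\ge \varepsilon$. Stationarity gives $b_i = z_i - \nu + \mu_i$ for each $i$; combining this with primal feasibility and complementary slackness ($\mu_i b_i=0$ for $i<n$ and $\mu_n(b_n-\varepsilon)=0$) yields, case by case, $b_i=\max\{z_i-\nu,0\}$ for $i=1,\dots,n-1$ and $b_n=\max\{z_n-\nu,\varepsilon\}$. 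In other words, $b=(z-\nu\ones)_{+,\varepsilon}$ for the unique $\nu$ that makes $\ones^T b=1$.

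To justify the bisection, I would observe that each coordinate of $(z-\nu\ones)_{+,\varepsilon}$ is continuous and nonincreasing in $\nu$: the first $n-1$ coordinates are $\max\{z_i-\nu,0\}$, and the last is $\max\{z_n-\nu,\varepsilon\}$, both of which are clearly nonincreasing in $\nu$. Hence their sum $f(\nu)=\ones^T(z-\nu\ones)_{+,\varepsilon}$ is continuous and nonincreasing in $\nu$, so a root is found by bisection whenever the initial interval brackets the value $1$.

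For the starting interval, I would evaluate the endpoints. At $\nu=\max_iz_i$, every coordinate for $i<n$ satisfies $z_i-\nu\le 0$, so it contributes $0$, while the last coordinate contributes $\varepsilon$; hence $f(\max_iz_i)=\varepsilon\le 1$. At $\nu=\max_iz_i-1$, pick $j$ with $z_j=\max_iz_i$: if $j<n$ then $\max\{z_j-\nu,0\}=1$, and if $j=n$ then $\max\{z_j-\nu,\varepsilon\}=\max\{1,\varepsilon\}=1$ (using $\varepsilon\le 1$); since the other terms are nonnegative, $f(\max_iz_i-1)\ge 1$. So $1\in[f(\max_iz_i),f(\max_iz_i-1)]$, and bisection on $[\max_iz_i-1,\max_iz_i]$ converges to the unique $\nu$.

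The only real subtlety is keeping the $n$th coordinate's role straight — its lower bound is $\varepsilon$ rather than $0$, so the $\max$ and the complementary slackness are adjusted accordingly — but once that is written down carefully the argument is a routine KKT analysis combined with a one-dimensional monotone root-finding argument. I do not anticipate any serious obstacle beyond the bookkeeping for this special coordinate.
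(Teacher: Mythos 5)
Your proof is correct and follows essentially the same route as the paper's: both reduce the projection to the KKT conditions of the underlying QP (the paper via partial dualization of the equality constraint, you via the full multiplier system, which is the same computation), arrive at $b=(z-\nu\ones)_{+,\varepsilon}$, and justify bisection by the identical monotonicity and endpoint evaluations $f(\max_iz_i)=\varepsilon\le 1$ and $f(\max_iz_i-1)\ge 1$. No gaps.
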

\begin{proof}
By definition of the projection, $x=\Pi(z)$ is the solution of
\[
\label{e-proj-prob}
\begin{array}{ll}
\mbox{minimize}&
\frac{1}{2}\|x-z\|_2^2\\
\mbox{subject to}
&
\ones^Tx=1\\
&
x\ge 0,\quad x_n\ge \varepsilon.
\end{array}
\]
This problem is equivalent to
\[
\begin{array}{ll}
\mbox{minimize}&
\frac{1}{2}\|x-z\|_2^2
+\nu(\ones^Tx-1)\\
\mbox{subject to}
&
x\ge 0,\quad x_n\ge \varepsilon
\end{array}
\]
for an optimal dual variable $\nu\in \reals$.
This follows from dualizing with respect to the constraint $\ones^Tx=1$ (and not the others),
applying strong Lagrange duality,
and using fact that the objective is strictly convex
\cite{bertsekas2009}.
This problem is in turn equivalent to
\[
\begin{array}{ll}
\mbox{minimize}&
\frac{1}{2}\|x-(z-\nu\ones)\|_2^2\\
\mbox{subject to}
&
x\ge 0,\quad x_n\ge \varepsilon,
\end{array}
\]
which has the analytic solution
\[
x^\star=(z-\nu\ones)_{+,\varepsilon}.
\]

An optimal dual variable $\nu$ must satisfy
\[
\ones^T
(z-\nu\ones)_{+,\varepsilon}=1,
\]
by the KKT conditions.
Write $h(\nu)=(z-\nu\ones)_{+,\varepsilon}$.
Then $h(\nu)$ a continuous nonincreasing function
with 
\[
h (\max_i z_i-1)\ge 1,\quad h(\max_iz_i)=\varepsilon.
\]
So a solution of $h(\nu)=1$ is in the interval
$[\max_iz_i-1,\max_iz_i]$.

\end{proof}

\begin{lem}\label{lem-opt-cond}
A pair $(b^\star,\kappa^\star)$ is a solution of the RCK problem
if and only if it satisfies conditions \eqref{e-opt-cond}.
\end{lem}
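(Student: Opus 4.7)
The plan is to identify the conditions \eqref{e-opt-cond} with the KKT conditions of the RCK problem and then invoke the fact that KKT is both necessary and sufficient for a convex problem under Slater's condition. We already know from \S\ref{s-rck} that the RCK problem is convex, and Lemma~\ref{lem-dual-sol} guarantees existence of a Lagrange multiplier for the risk constraint (via Rockafellar's result on strict feasibility). Together with the trivial strict feasibility of $b \geq 0$ at any interior bet, this is enough to ensure strong duality and hence KKT necessity and sufficiency.

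First I would form the full Lagrangian, introducing $\kappa \geq 0$ for the risk constraint $\Expect(r^Tb)^{-\lambda} \leq 1$, $\nu \in \reals$ for $\ones^T b = 1$, and $\mu \geq 0$ for $b \geq 0$:
\[
L(b,\kappa,\nu,\mu) = -\Expect\log(r^Tb) + \kappa\bigl(\Expect(r^Tb)^{-\lambda}-1\bigr) + \nu(\ones^Tb-1) - \mu^Tb.
\]
Differentiating under the expectation (justified by the boundedness of $1/(r^Tb)$ on a neighborhood of $b^\star$, since $b_n^\star > 0$ or, more generally, $r^Tb^\star$ is bounded below) gives the stationarity condition
\[
\Expect\frac{r_i}{r^Tb^\star} + \kappa^\star\lambda\,\Expect\frac{r_i}{(r^Tb^\star)^{\lambda+1}} = \nu - \mu_i, \quad i=1,\ldots,n,
\]
together with $\mu_i \geq 0$ and $\mu_i b_i^\star = 0$, plus $\kappa^\star(\Expect(r^Tb^\star)^{-\lambda}-1)=0$ and primal/dual feasibility.

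The key step is then to eliminate $\mu$ and $\nu$ and recognize that $\nu = 1 + \kappa^\star\lambda$. Multiplying the stationarity equation by $b_i^\star$ and summing over $i$, complementary slackness $\mu_i b_i^\star = 0$ kills the $\mu$ terms, while $\sum_i b_i^\star r_i = r^Tb^\star$ collapses the two expectations to $1 + \kappa^\star\lambda\,\Expect(r^Tb^\star)^{-\lambda}$. If $\kappa^\star>0$, complementary slackness forces $\Expect(r^Tb^\star)^{-\lambda}=1$; if $\kappa^\star=0$, the extra term vanishes. Either way $\nu = 1 + \kappa^\star\lambda$. Substituting back and using $\mu_i \geq 0$ with $\mu_i b_i^\star = 0$ converts stationarity to the case-split equality/inequality in \eqref{e-opt-cond}.

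For the converse, I would simply note that conditions \eqref{e-opt-cond} reconstruct a valid KKT tuple — setting $\nu = 1 + \kappa^\star\lambda$ and $\mu_i = \nu - \Expect r_i/(r^Tb^\star) - \kappa^\star\lambda\,\Expect r_i/(r^Tb^\star)^{\lambda+1} \geq 0$ — and invoke sufficiency of KKT for the convex RCK problem. The main obstacle I anticipate is the derivation of $\nu = 1+\kappa^\star\lambda$, which is where the structure of the problem (the homogeneity hidden in $\ones^Tb=1$ combined with the power form of the constraint) has to be exploited; everything else is a bookkeeping application of standard convex KKT theory.
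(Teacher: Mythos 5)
Your overall architecture---characterize solutions via the KKT conditions of the convex RCK problem, use Lemma~\ref{lem-dual-sol} for the multiplier $\kappa^\star$, and recover $\nu=1+\kappa^\star\lambda$ by pairing stationarity with $b^\star$ and using complementary slackness---is essentially the paper's own route; the paper merely phrases stationarity as a variational inequality over the simplex (the one-sided derivative of $\varphi(\varepsilon)$ along $(1-\varepsilon)b^\star+\varepsilon b$, evaluated at $b=e_i$) instead of introducing explicit multipliers $\nu$ and $\mu$, and your elimination of $\mu$ and $\nu$ correctly reproduces conditions \eqref{e-opt-cond}.

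The genuine gap is the step ``differentiating under the expectation, justified by the boundedness of $1/(r^Tb)$ on a neighborhood of $b^\star$, since $b_n^\star>0$.'' The lemma makes no assumption that $b_n^\star>0$ (that is a technical assumption imposed only for the stochastic gradient algorithm), and without it $r^Tb^\star$ need not be bounded away from zero, so neither $r_i/(r^Tb)$ nor $r_i/(r^Tb)^{\lambda+1}$ admits an obvious dominating function. In particular, for an index $i$ with $b_i^\star=0$ the quantities $\Expect\, r_i/(r^Tb^\star)$ and $\Expect\, r_i/(r^Tb^\star)^{\lambda+1}$ are not a priori finite, so you cannot simply write the stationarity \emph{equation} $\cdots=\nu-\mu_i$ for all $i$: the finiteness of these expectations is part of what must be proved, and is exactly the content of the inequality branch of \eqref{e-opt-cond}. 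The paper avoids forming a full gradient altogether: it first shows $\varphi(\varepsilon)<\infty$ for $\varepsilon\in[0,1)$ (and for small negative $\varepsilon$ when $b_i^\star>0$) by comparison with $(1-\varepsilon)r^Tb^\star$, then applies \cite[Proposition~2.1]{bertsekas1973} to differentiate the expectation one-sidedly along feasible directions only, and finally reads off finiteness of the offending expectations from $\partial_+\varphi(0)\ge 0$. To repair your argument you would need either this directional-derivative machinery or an added hypothesis that $r^Tb^\star$ is bounded below almost surely. The same caveat applies to your converse: ``sufficiency of KKT'' requires the computed vector to be an honest subgradient of $-\Expect\log(r^Tb)+\kappa^\star\Expect(r^Tb)^{-\lambda}$ at $b^\star$, which again rests on the interchange you have not justified.
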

\begin{proof}
Assume
$(b^\star,\kappa^\star)$ is a solution of the RCK problem.
This immediately gives us $\ones^Tb^\star=1$, $b^\star\ge 0$,
$\Expect (r^Tb^\star)^{-\lambda}\le 1$, and $\kappa^\star\ge 0$.
Moreover $b^\star$ is a solution to
\BEQ\label{e-kelly-lagrange}
\begin{array}{ll}
\mbox{minimize}&
-\Expect \log (r^Tb)+\kappa^\star \Expect(r^Tb)^{-\lambda}\\
\mbox{subject to}& \ones^Tb = 1, \quad b \geq 0
\end{array}
\EEQ
and $\kappa^\star(\Expect (r^Tb^\star)^{-\lambda}-1)=0$,
by Lagrange duality.

Consider $b^\varepsilon=(1-\varepsilon)b^\star+\varepsilon b$,
where $b\in \reals^n$ satisfies $\ones^Tb=1$ and $b\ge 0$.
Define
$\varphi(\varepsilon)=
-\Expect \log (r^Tb^\varepsilon)+\kappa^\star \Expect(r^Tb^\varepsilon)^{-\lambda}$.
Since $b^\varepsilon$ is feasible for problem \eqref{e-kelly-lagrange} for $\varepsilon\in[0,1)$
and since $b^\star$ is optimal,
we have $\varphi(0)\le \varphi(\varepsilon)$ for $\varepsilon\in[0,1)$.
We later show $\varphi(\varepsilon)<\infty$
for $\varepsilon\in[0,1)$.
So $\partial_+\varphi(0)\ge 0$, where $\partial_+$ denotes the right-sided derivative.

By \cite[Proposition~2.1]{bertsekas1973} we have 
\[
\partial_+\varphi(\varepsilon)=
-\Expect \frac{r^T(b-b^\star)}{r^Tb^\varepsilon}-\kappa^\star\lambda \Expect\frac{r^T(b-b^\star)}{(r^Tb^\varepsilon)^{\lambda+1}}.
\]
So we have
\[
0\le
\partial_+\varphi(0)=
1
+
\kappa^\star\lambda \Expect (r^Tb^\star)^{-\lambda}
-\Expect \frac{r^Tb}{r^Tb^\star}-\kappa^\star\lambda \Expect\frac{r^Tb}{(r^Tb^\star)^{\lambda+1}}.
\]
Using $\kappa^\star(\Expect (r^Tb^\star)^{-\lambda}-1)=0$, and reorganizing we get
\BEQ\label{e-opt-cond2}
\left(
\Expect \frac{r}{r^Tb^\star}+\kappa^\star\lambda \Expect\frac{r}{(r^Tb^\star)^{\lambda+1}}
\right)^Tb
\le
1
+
\kappa^\star\lambda,
\EEQ
for any $b\in \reals^n$ that satisfies $\ones^Tb=1$ and $b\ge 0$.
With $b=e_i$ for $i=1,\dots,n$, we get
\BEQ\label{e-opt-cond3}
\Expect \frac{r_i}{r^Tb^\star}+\kappa^\star\lambda \Expect\frac{r_i}{(r^Tb^\star)^{\lambda+1}}
\le
1
+
\kappa^\star\lambda,
\EEQ
for $i=1,\dots,n$.

Now assume  $b^\star_i>0$ for some $i$ and let $b=e_i$.
Then $b^\varepsilon$ is feasible 
for problem \eqref{e-kelly-lagrange} for small negative $\varepsilon$.
We later show $\varphi(\varepsilon)<\infty$ for small negative $\varepsilon$.
This means $\partial_+\varphi(0)=0$ in this case, and we conclude
\[
\Expect \frac{r_i}{r^Tb^\star}+\kappa^\star\lambda \Expect\frac{r_i}{(r^Tb^\star)^{\lambda+1}}
~\left\{
\begin{array}{ll}
\le 1+\kappa^\star\lambda & b_i=0\\
= 1+\kappa^\star\lambda & b_i>0.
\end{array}\right.
\]

It remains to show that $\varphi(\varepsilon)<\infty$ for appropriate values of $\varepsilon$.
First note that $b^\star$, a solution, has finite objective value, \ie,
$-\Expect\log(r^Tb^\star)<\infty$ and $\Expect(r^Tb^\star)^{-\lambda}<\infty$.
So for $\varepsilon\in[0,1)$, we have
\begin{align*}
\varphi(\varepsilon)&=-\Expect \log((1-\varepsilon)r^Tb^\star+\varepsilon r^Tb)
+\kappa^\star
\Expect((1-\varepsilon)r^Tb^\star+\varepsilon r^Tb)^{-\lambda}\\
&\le
-\Expect \log((1-\varepsilon)r^Tb^\star)
+\kappa^\star
\Expect((1-\varepsilon)r^Tb^\star)^{-\lambda}\\
&=
-\log(1-\varepsilon)
-\Expect \log(r^Tb^\star)
+\kappa^\star(1-\varepsilon)^{-\lambda}
\Expect(r^Tb^\star)^{-\lambda}<\infty.
\end{align*}
Next assume $b_i>0$ for some $i\in \{1,\dots,n\}$. Then for $\varepsilon\in (-b_i,0)$
we have
\begin{align*}
\varphi(\varepsilon)
&\le
-\Expect \log(r^Tb^\star+\varepsilon r_i)
+\kappa^\star
\Expect(r^Tb^\star+\varepsilon r_i)^{-\lambda}\\
&\le
-\Expect \log(((b_i+\varepsilon)/b_i)r^Tb^\star)
+\kappa^\star
\Expect(((b_i+\varepsilon)/b_i)r^Tb^\star)^{-\lambda}\\
&=
-\log((b_i+\varepsilon)/b_i)
-\Expect \log(r^Tb^\star)
+\kappa^\star
((b_i+\varepsilon)/b_i)^{-\lambda}
\Expect(r^Tb^\star)^{-\lambda}<\infty.
\end{align*}

Finally, assume conditions \eqref{e-opt-cond} for $(b^\star,\kappa^\star)$,
and let us go through the argument in reverse order to show the converse.
Conditions  \eqref{e-opt-cond} implies
condition \eqref{e-opt-cond3}, which in turn implies
condition \eqref{e-opt-cond2} as $\ones^Tb=1$.
Note  $b^\star$ has finite objective value
becuase
$\Expect(r^Tb^\star)^{-\lambda}<\infty$ by assumption
and 
$-\Expect \log (r^Tb^\star)\le (1/\lambda)\log \Expect (r^Tb^\star)^{-\lambda}<\infty$
by Jensen's inequality.
So $\varphi(0)\le\varphi(\varepsilon)$ for $\varepsilon\in[0,1)$
by the same argument as before, 
\ie, $b^\star$ is optimal for problem \eqref{e-kelly-lagrange}.
This fact, together with conditions \eqref{e-opt-cond},
give us the KKT conditions of the RCK problem,
and we conclude 
$(b^\star,\kappa^\star)$ is a solution of the RCK problem
by Lagrange duality \cite{bertsekas2009}.
\end{proof}

\begin{lem}\label{lem-rw}
Consider an IID sequence $X_1,X_2,\dots$ from probability measure $\mu$,
its random walk $S_n=X_1+X_2+\dots +X_n$,
a stopping time $\tau$,
and
\[
\psi(\lambda)=
\log \Expect \exp( -\lambda X)
=\log \int \exp(-\lambda x)\;d\mu(x).
\]
Then we have
\[
\Expect\left[
\exp(-\lambda S_\tau-\tau\psi(\lambda))
\mid\tau<\infty\right]
\prob(\tau<\infty)\le 1.
\]
\end{lem}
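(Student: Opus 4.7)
The plan is to recognize $\exp(-\lambda S_n - n\psi(\lambda))$ as an exponential martingale and apply optional stopping together with Fatou's lemma. We may assume $\psi(\lambda)<\infty$ (otherwise the claim is vacuous, since the left-hand side is $0$ after interpreting $\exp(-\infty)=0$, or one can simply restrict attention to $\lambda$ in the effective domain of $\psi$).

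First I would set $S_0=0$ and define
\[
M_n=\exp(-\lambda S_n-n\psi(\lambda)),\qquad n=0,1,2,\dots,
\]
together with the natural filtration $\mathcal{F}_n=\sigma(X_1,\dots,X_n)$. A one-line computation using independence and the definition of $\psi$ shows
\[
\Expect[M_{n+1}\mid \mathcal{F}_n]
=M_n\cdot \Expect\exp(-\lambda X_{n+1}-\psi(\lambda))
=M_n,
\]
so $\{M_n\}$ is a nonnegative martingale with $\Expect M_n=M_0=1$. Since $\tau\wedge n$ is a bounded stopping time, the optional stopping theorem gives $\Expect M_{\tau\wedge n}=1$ for every $n$.

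Next I would let $n\to\infty$. On the event $\{\tau<\infty\}$ we have $M_{\tau\wedge n}\to M_\tau$ pointwise, and on $\{\tau=\infty\}$ the variable $M_{\tau\wedge n}=M_n$ is nonnegative, so
\[
\liminf_{n\to\infty} M_{\tau\wedge n}\ \ge\ M_\tau\,\mathbf{1}_{\tau<\infty}.
\]
Fatou's lemma therefore yields
\[
\Expect\!\bigl[M_\tau\,\mathbf{1}_{\tau<\infty}\bigr]
\ \le\ \liminf_{n\to\infty}\Expect M_{\tau\wedge n}
\ =\ 1.
\]
Rewriting the left-hand side as a conditional expectation,
\[
\Expect\!\bigl[\exp(-\lambda S_\tau-\tau\psi(\lambda))\mid\tau<\infty\bigr]\,\prob(\tau<\infty)
\ =\ \Expect\!\bigl[M_\tau\,\mathbf{1}_{\tau<\infty}\bigr]
\ \le\ 1,
\]
which is the desired inequality.

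The only subtlety I anticipate is the passage to the limit: because we do not assume $\tau<\infty$ a.s.\ and we have no uniform integrability, we cannot directly apply the optional stopping theorem to $\tau$ itself, which is why the Fatou step (rather than dominated convergence) is essential, and also why we end up with an inequality rather than an equality. Everything else is routine once the exponential martingale is identified.
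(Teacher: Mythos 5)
Your proof is correct, and it takes a genuinely different (though closely related) route from the paper's. The paper proves the lemma by an explicit exponential change of measure: it introduces the tilted law $d\mu_\lambda(x)=\exp(-\lambda x-\psi(\lambda))\,d\mu(x)$, computes $\prob_{\mu_\lambda}(\tau=n)=\Expect\bigl[\exp(-\lambda S_\tau-\tau\psi(\lambda))I_{\{\tau=n\}}\bigr]$, sums over $n$, and concludes from $\prob_{\mu_\lambda}(\tau<\infty)\le 1$. You instead identify $M_n=\exp(-\lambda S_n-n\psi(\lambda))$ as a nonnegative martingale, apply optional stopping at the bounded time $\tau\wedge n$, and pass to the limit with Fatou. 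The two arguments rest on the same identity $\Expect\exp(-\lambda X-\psi(\lambda))=1$ — your martingale is exactly the likelihood ratio of the paper's tilted measure on $\mathcal{F}_n$ — but the bookkeeping differs: the change-of-measure proof makes the source of the slack transparent (the deficit is precisely $\prob_{\mu_\lambda}(\tau=\infty)$, so one can read off when equality holds), while your proof is more modular and standard, requiring no new measure and handling the possibility $\prob(\tau=\infty)>0$ cleanly through Fatou, at the cost of obscuring where the inequality comes from. Your anticipation of the subtlety (no uniform integrability, hence Fatou rather than optional stopping at $\tau$ itself) is exactly right, and your handling of the case $\psi(\lambda)=\infty$ is a reasonable convention that the paper leaves implicit.
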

This lemma is a modification of an identity from
\cite{wald1944},
\cite[\S XVIII.2]{feller1971},
and \cite[\S 9.4]{gallager2013Stochastic}.
\begin{proof}
Consider the tilted probability measure
\[
d\mu_{\lambda}(x)=\exp(-\lambda x-\psi(\lambda))d\mu(x)
\]
and write $\prob_{\mu_\lambda}$ for the probability
under the tilted measure $\mu_\lambda$.
Then we have
\begin{align*}
\prob_{\mu_\lambda}(\tau=n)&=
\int I_{\{\tau=n\}}
d\mu_{\lambda}(x_1,x_2,\dots,x_n)\\
&=
\int I_{\{\tau=n\}}
\exp(-\lambda s_n-n\psi(\lambda))d\mu(x_1,x_2,\dots,x_n)\\
&=
\Expect\left[
\exp(-\lambda S_\tau-\tau\psi(\lambda))
I_{\{\tau=n\}}\right]
\end{align*}
By summing through $n=1,2,\ldots$ we get
\begin{align*}
\prob_{\mu_\lambda}(\tau<\infty)
&=
\Expect\left[
\exp(-\lambda S_\tau-\tau\psi(\lambda))
I_{\{\tau<\infty\}}\right]\\
&=
\Expect\left[
\exp(-\lambda S_\tau-\tau\psi(\lambda))
\mid\tau<\infty\right]
\prob(\tau<\infty).
\end{align*}
Since $\prob_{\mu_\lambda}(\tau<\infty)\le 1$, we have the desired result.
\end{proof}

\section{DCP specification}
\label{s-dcp-spec}

The finite outcome RCK problem
\eqref{e-problem-finite-outcome}
can be formulated and solved in CVXPY as 
\begin{verbatim}
b = Variable(n)
lambda_risk = Parameter(sign = 'positive')
growth = pi.T*log(r.T*b)
risk_constraint = (log_sum_exp (log(pi) - lambda_risk * log(r.T*b)) <= 0)
constraints = [ sum_entries(b) == 1, b >= 0, risk_constraint ]
risk_constr_kelly = Problem(Maximize(growth),constraints)
risk_constr_kelly.solve()
\end{verbatim}
Here \verb|r| is the matrix whose columns are the return 
vectors, and \verb|pi| is the vector of probabilities.
The second to last line forms the problem (object), and in the 
last line the problem is solved.  The optimal bet is written into
\verb|b.value|.
We note that if we set \verb|lambda_risk| equal to 0 this problem formulation 
is equivalent (computationally) to the Kelly problem.

\subsection*{Acknowledgments}
We thank Andrea Montanari, B. Ross Barmish, and Minggao Gu
for useful discussions.

\bibliography{kelly}

\newcommand{\etalchar}[1]{$^{#1}$}
\begin{thebibliography}{UMZ{\etalchar{+}}14}

\bibitem[Ber73]{bertsekas1973}
D.~P. Bertsekas.
\newblock Stochastic optimization problems with nondifferentiable cost
  functionals.
\newblock {\em Journal of Optimization Theory and Applications},
  12(2):218--231, 1973.

\bibitem[Ber09]{bertsekas2009}
D.~P. Bertsekas.
\newblock {\em Convex Optimization Theory}.
\newblock Athena Scientific, 2009.

\bibitem[Bro00]{browne2000risk}
S.~Browne.
\newblock Risk-constrained dynamic active portfolio management.
\newblock {\em Management Science}, 46(9):1188--1199, 2000.

\bibitem[Bub15]{bubeck2015}
S.~Bubeck.
\newblock Convex optimization: Algorithms and complexity.
\newblock {\em Foundations and Trends® in Machine Learning}, 8(3--4):231--357,
  2015.

\bibitem[BV04]{boyd2004convex}
S.~Boyd and L.~Vandenberghe.
\newblock {\em Convex Optimization}.
\newblock Cambridge University Press, 2004.

\bibitem[CT12]{cover2012elements}
T.~Cover and J.~Thomas.
\newblock {\em Elements of Information Theory}.
\newblock John Wiley \& Sons, 2012.

\bibitem[DB16]{cvxpy}
S.~Diamond and S.~Boyd.
\newblock {CVXPY}: A {P}ython-embedded modeling language for convex
  optimization.
\newblock {\em Journal of Machine Learning Research}, 2016.

\bibitem[DCB13]{domahidi2013ecos}
A.~Domahidi, E.~Chu, and S.~Boyd.
\newblock {ECOS}: {A}n {SOCP} solver for embedded systems.
\newblock In {\em European Control Conference ({ECC})}, pages 3071--3076, 2013.

\bibitem[DL12]{davis2012fractional}
M.~Davis and S.~Lleo.
\newblock Fractional {K}elly strategies in continuous time: Recent
  developments.
\newblock In L.~C. MacLean and W.~T. Ziemba, editors, {\em Handbook of the
  Fundamentals of Financial Decision Making}, pages 753--788. World Scientific
  Publishing, 2012.

\bibitem[Fel71]{feller1971}
W.~Feller.
\newblock {\em An Introduction to Probability Theory and Its Applications},
  volume~2.
\newblock Wiley, 2nd edition, 1971.

\bibitem[Gal13]{gallager2013Stochastic}
R.~G. Gallager.
\newblock {\em Stochastic Processes: Theory for Applications}.
\newblock Cambridge University Press, 2013.

\bibitem[GB14]{cvx}
M.~Grant and S.~Boyd.
\newblock {CVX}: Matlab software for disciplined convex programming, version
  2.1.
\newblock \url{http://cvxr.com/cvx}, March 2014.

\bibitem[GBY06]{grant2006dcp}
M.~Grant, S.~Boyd, and Y.~Ye.
\newblock Disciplined convex programming.
\newblock In L.~Liberti and N.~Maculan, editors, {\em Global Optimization: From
  Theory to Implementation}, Nonconvex Optimization and its Applications, pages
  155--210. Springer, 2006.

\bibitem[Kel56]{kelly1956new}
J.~Kelly, Jr.
\newblock A new interpretation of information rate.
\newblock {\em IRE Transactions on Information Theory}, 2(3):185--189, 1956.

\bibitem[KY03]{kushner2003}
H.~J. Kushner and G.~G. Yin.
\newblock {\em Stochastic Approximation and Recursive Algorithms and
  Applications}.
\newblock Springer, 2003.

\bibitem[L\"04]{lofberg2004yalmip}
J.~L\"{o}fberg.
\newblock {YALMIP}: A toolbox for modeling and optimization in {MATLAB}.
\newblock In {\em Proceedings of the {IEEE} International Symposium on Computer
  Aided Control Systems Design}, pages 284--289, 2004.

\bibitem[Mar52]{markowitz1952portfolio}
H.~Markowitz.
\newblock Portfolio selection.
\newblock {\em The Journal of Finance}, 7(1):77--91, 1952.

\bibitem[Mer90]{merton1990continuous}
R.~C. Merton.
\newblock {\em Continuous-Time Finance}.
\newblock Wiley-Blackwell, 1990.

\bibitem[MTZ11]{maclean2011kelly}
L.~C. MacLean, E.~O. Thorp, and W.~T. Ziemba.
\newblock {\em The {K}elly Capital Growth Investment Criterion: Theory and
  Practice}, volume~3.
\newblock World Scientific Publishing, 2011.

\bibitem[NJLS09]{nemirovski2009robust}
A.~S. Nemirovski, A.~Juditsky, G.~Lan, and A.~Shapiro.
\newblock Robust stochastic approximation approach to stochastic programming.
\newblock {\em SIAM Journal on Optimization}, 19(4):1574--1609, 2009.

\bibitem[NY78]{nemirovski1978}
A.~S. Nemirovski and D.~B. Yudin.
\newblock On {C}esari's convergence of the steepest descent method for
  approximating saddle points of convex-concave functions.
\newblock {\em Doklady Akademii Nauk SSSR}, 239:1056--1059, 1978.

\bibitem[NY83]{nemirovski1983}
A.~S. Nemirovski and D.~B. Yudin.
\newblock {\em Problem Complexity and Method Efficiency in Optimization}.
\newblock Wiley, 1983.

\bibitem[Pol87]{polyak1987introduction}
B.~T. Polyak.
\newblock {\em Introduction to Optimization}.
\newblock Optimization Software, New York, 1987.

\bibitem[RM51]{robbins1951}
H.~Robbins and S.~Monro.
\newblock A stochastic approximation method.
\newblock {\em The Annals of Mathematical Statistics}, 22(3):400--407, 1951.

\bibitem[Roc74]{rockafellar1974}
R.~T. Rockafellar.
\newblock {\em Conjugate Duality and Optimization}.
\newblock Society for Industrial and Applied Mathematics, Philadelphia, 1974.

\bibitem[Ser15]{akle2015algorithms}
S.~A. Serrano.
\newblock {\em Algorithms for Unsymmetric Cone Optimization and an
  Implementation for Problems with the Exponential Cone}.
\newblock PhD thesis, Stanford University, 2015.

\bibitem[UMZ{\etalchar{+}}14]{convexjl}
M.~Udell, K.~Mohan, D.~Zeng, J.~Hong, S.~Diamond, and S.~Boyd.
\newblock Convex optimization in {Julia}.
\newblock {\em SC14 Workshop on High Performance Technical Computing in Dynamic
  Languages}, 2014.

\bibitem[Wal44]{wald1944}
A.~Wald.
\newblock On cumulative sums of random variables.
\newblock {\em The Annals of Mathematical Statistics}, 15(3):283--296, 1944.

\bibitem[Whi81]{whittle1981}
P.~Whittle.
\newblock Risk-sensitive linear/quadratic/{G}aussian control.
\newblock {\em Advances in Applied Probability}, 13(4):764--777, 1981.

\bibitem[Whi90]{whittle1990}
P.~Whittle.
\newblock {\em Risk-sensitive Optimal Control}.
\newblock Wiley, 1990.

\end{thebibliography}
\end{document}